\newtheorem{theorem}{Theorem}
\newtheorem{assumption}{Assumption}
\newtheorem{remark}{Remark}
\newtheorem{problem}{Problem}
\newenvironment{proof}{{\it Proof:}}{\hspace*{\fill}~{\mbox
		{\rule[0pt]{1.5ex}{1.5ex}}}\par\endtrivlist\unskip}
\journal{Journal of The Franklin Institute}
\begin{document}

\begin{frontmatter}

\title{Distributed goal assignment strategy for improving leader-following formation control performance}

\author[1]{Yun Ho Choi}
\author[1]{Doik Kim\corref{2}}
\ead{doikkim@kist.re.kr}

\address[1]{Center for Intelligent \& Interactive Robotics
	Research, Korea Institute of Science and Technology, 5, Hwarang-ro 14-gil, Seongbuk-gu, Seoul, 02792, South Korea}

\cortext[2]{Corresponding author.}

\begin{abstract}
This paper investigates a distributed goal assignment problem in leader-following formation control of second-order multi-agent systems. It is assumed that each agent can communicate with nearby agents within the communication range and the leader information is only available to a subset of agents.
Compared with existing formation control schemes addressing the goal assignment issue, the main contribution of this paper is to construct a novel distributed assignment strategy allotting appropriate goal positions of agents in the leader-following formation control framework.
Based on the rigorous analysis using the Lyapunov stability theory, the enhancement of the control performance is proved via the proposed assignment strategy. To demonstrate the effectiveness of our theoretical results, two examples including multiple quadrotors are simulated.
\end{abstract}

\begin{keyword}
Distributed goal assignment, goal exchange, leader-following formation control, multi-agent systems.
\end{keyword}

\end{frontmatter}

%% \linenumbers

%% main text
 \section{Introduction}
\label{sec:intro}
Distributed coordination of multi-agent systems
has attracted considerable attention from multidisciplinary areas
owing to its broad applications such as scheduling of automated highway systems, attitude alignment of satellites, cooperative transport, and so on \cite{Saber2007,Ren2007}.
One of the fundamental problems of coordination of multi-agent systems is
formation control. The aim of formation control is to design an appropriate control algorithm such that agents form a predefined geometrical shape.
There are a lot of formation control results using different methodologies such as behavior-based \cite{Balch1998},
virtual structure \cite{Lewis1997}, and leader-following approaches \cite{Shao2007}.
Among those methodologies, the leader-following approach has been widely employed because of its simplicity and reliability \cite{Shao2007,Shojaei2018,Gao2019,Lashkari2020}.
In this approach, a predefined formation shape is specified by the relative positions between a referenced agent called a leader and other agents called followers.
Therefore, each follower is controlled to keep the pre-specified relative position from the leader.
For a group of tractor-trailer systems, an adaptive leader-following formation controller using neural networks was designed in \cite{Shojaei2018}. 
In \cite{Gao2019} and \cite{Lashkari2020}, leader-following control schemes were developed for autonomous underwater vehicles without the leader velocity, and mobile robots with docking systems considering the battery failure and maneuverability on rough terrain, respectively.
However, these studies \cite{Shojaei2018,Gao2019,Lashkari2020} assumed that the leader communicates with all followers 
which limits their applicability because the real multi-robot systems have a limited communication range.
To address this problem, a number of researchers have developed leader-following formation control schemes in the presence of limited leader information under the distributed communication network.  
In \cite{Shi2019,Kang2020}, formation control and collision avoidance problems were handled simultaneously by adopting the artificial potential function approach.
Recently, time-varying formation trackers were designed considering mismatched disturbances \cite{Hua2020} or switching communication topology \cite{Wang2019}. 
For practical applications of leader-following formation control approaches, various robotic systems were handled such as mobile robots  \cite{Li2021,Yan2020}, underwater vehicles\cite{Yan2019,Liu2021a}, surface vehicles \cite{Liang2021}, and quadrotors \cite{Liu2021b}.   
Nevertheless, in the existing leader-following formation control schemes 
\cite{Shi2019,Kang2020,Hua2020,Wang2019,Li2021,Yan2020,Yan2019,Liu2021a,Liang2021,Liu2021b}, the goal assignment issue allotting appropriate goal positions for followers was not investigated where the goal indicates the desired relative position with respect to the leader.
Namely, the goals of followers in \cite{Shi2019,Kang2020,Hua2020,Wang2019,Li2021,Yan2020,Yan2019,Liu2021a,Liang2021,Liu2021b} are not allowed to be changed which leaves room for the development of assignment strategy in the leader-following formation control field.

In coordinated control of multi-agent systems, goal assignment plays a key role
because it can reduce the time required to reach the goal position. From the practical point of view, this advantage becomes more significant when the scale of the team of agents is huge likes the drone show. 
Thus, substantial efforts have been devoted to resolving the goal assignment issue.
% In \cite{Panagou2014}, an assignment algorithm was invented to achieve the shortest total distance to the goals.
Centralized assignment strategies were suggested to minimize the sum of the distance traveled or the sum of the squared distance traveled by agents
in \cite{Yu2013,Turpin2014}. To improve scalability, distributed goal assignment strategies have been developed recently.
In \cite{Panagou2020}, a multiple Lyapunov function method was used to assign goals of agents as well as to generate safe trajectories in a distributed manner.
In \cite{Wang2020}, a distributed algorithm assigning goals and creating a collision-free path to the goals was presented
where the path is given by a sequence of nodes of the grid-based environment.
Despite this success, the goal positions considered in \cite{Panagou2020,Wang2020} are defined as absolute positions,
and thus the existing assigning methods cannot be applied to the leader-following formation control
where goals of followers are given by relative positions from the leader. 
Moreover, their approaches are only applicable to first-order multi-agent systems.
In \cite{Kowalczyk2019}, a leader-following formation control problem with distributed goal assignment was investigated.
The main procedure of the goal assignment algorithm in \cite{Kowalczyk2019} is to select two agents and exchange their goals
if an error under the current goals is bigger than that under the exchanged goals. 
Recently, this work is extended to second-order systems \cite{Choi2021b}. 
%Note that the computation for the errors is locally available, the assignment algorithm is activated in a distributed manner.
However, there works \cite{Kowalczyk2019,Choi2021b} suffered from the fact that all the followers should know the leader information. Thus, the centralized communication with the center node being the leader for leader-following multi-agent systems should be available. From the practical point of view, the centralized communication is not available for some robots with limited communication range and it can encounter a significant bottleneck on the central node as the number of followers increases.
That is, the strategies in \cite{Kowalczyk2019,Choi2021b} cannot deal with the assignment problem of large-scale multi-robot systems under the distributed communication network.  
There is a recent formation control approach with distributed goal assignment was presented in \cite{Choi2021}. 
However, since this assignment strategy was developed based on the distance-based formation control framework, we cannot apply this strategy to the leader-following formation control framework.

Motivated by these observations, this paper aims to develop a novel distributed goal assignment strategy for leader-following formation control. We assume that the leader information is only available for a fraction of second-order followers and followers can communicate with nearby agents within their communication range.
Compared with existing related literature, the major contribution of this paper is three-fold:

(C1) This is the first attempt to address the goal assignment issue in leader-following formation control with limited leader information. That is, the goals of followers are assigned properly using an online goal assignment strategy different from the previous leader-following formation control schemes with fixed goals \cite{Shi2019,Kang2020,Hua2020,Wang2019,Li2021,Yan2020,Yan2019,Liu2021a,Liang2021,Liu2021b}.

(C2) In contrast to the centralized communication-based control and assignment scheme \cite{Kowalczyk2019,Choi2021b}, the proposed controller and assignment algorithm can be performed under the distributed communication network by designing a distributed leader estimator.

(C3) To reflect the distributed communication in goal assignment, a novel assignment strategy exchanging not only the goal positions but also the neighbors is developed. Besides, it is rigorously proved that the formation control performance is improved using the proposed assignment strategy.

The remainder of this paper is structured as follows.
In Section \ref{sec:formulation}, some preliminaries are given and the control problem is formulated.
In Section \ref{sec:estimator-controller}, a distributed estimator and a controller are derived and the asymptotic stability is analyzed.
Section \ref{sec:goalassignment} introduces a new goal assignment algorithm and the improved control performance is revealed using the proposed assignment algorithm.
% In Section V, the omnidirectional mobile robot is introduced as a practical application of the proposed leader-following formation control strategy with goal assignment. 
To illustrate the feasibility of our theoretical results, a group of multiple quadrotors are simulated in Section \ref{sec:simulation}.
 Conclusions are drawn in Section \ref{sec:conclusion}.

\section{Preliminaries and problem formulation}
\label{sec:formulation}
%\subsection{Graph theory}
%\label{sec:graph-theory}
%Consider an undirected graph $\mathcal{G}(\mathcal{V},\mathcal{E})$ of order $N$ where $\mathcal{V} = \{1,\dots,N\}$ is a set of nodes and $\mathcal{E} \subseteq \mathcal{V} \times \mathcal{V}$ is a set of edges.
%An edge $(j,i) \in \mathcal{E}$ indicates that agent $j$ and agent $i$ can obtain information from one another.
%Thus, in an undirected graph, $(i,j) \in \mathcal{E}$ if $(j,i) \in \mathcal{E}$.
%The adjacency matrix $A  = [a_{ij} ] \in \mathbb{R}^{N \times N}$ is defined as
%$a_{ii}=0$, $a_{ij} = a_{ji}$, $a_{ij} >0$ if $(j,i) \in \mathcal{E}$, and $a_{ij}=0$ otherwise.
%From $A$, the Laplacian matrix $L = [l_{ij}] \in \mathbb{R}^{N \times N} $ is defined as $l_{ii} = \sum_{j \neq i}a_{ij}$ and
%$l_{ij} = -a_{ij}$ where $j \neq i$.
%For $\mathcal{G}$, a path is a sequence of ordered edges of the form $(i_{1} , i_{2} ), (i_{2} , i_{3} ), \dots$,
%where $i_{k} \in \mathcal{V}$.
% If there is an undirected path between every node and every other node, the graph is said to be connected.

\subsection{Second-order multi-agent systems}
We consider the formation control problem of a group of agents in $\mathbb{R}^{d}$
consisting of a virtual leader and $N$ followers with $d=2,3$.
Each follower is modeled as follows:
\begin{align}
\begin{array}{l}
\dot{p}_{i}(t) = v_{i}(t)\\
\dot{v}_{i}(t) = u_{i}(t)
\end{array}	
\label{eq:agent}
\end{align}
where $i=1,\dots,N$, $p_{i}(t)$,  $v_{i}(t)$, and $u_{i}(t)$ denote the position, velocity, and control input of agent $i$, respectively. In this paper, the leader's position is defined as $p_{0}(t)$ which is independent of followers.

\begin{assumption}
\emph{
	Let the velocity and acceleration of the leader be $v_{0}(t) \triangleq \dot{p}_{0}(t)$ and $
	u_{0}(t) \triangleq \ddot{p}_{0}(t)$. 
	Then, the acceleration and its derivative is bounded as $\|u_{0}(t)\| \leq C_{u,0}$ and  $\|\dot{u}_{0}(t)\| \leq C_{u,1}$
	where $C_{u,0}$ and $C_{u,1}$ are positive constants.
}
	\label{as:leader-bound}
\end{assumption}

\subsection{Communication environment}
This section introduces the communication setup for the considered multi-agent system \eqref{eq:agent}.
\begin{assumption}
\emph{
	\label{as:follower-communication}
	Followers can communicate with nearby followers that are within their communication range $R$.}
\end{assumption}

\begin{assumption}
\emph{Leader information $p_{0}(t)$, $v_{0}(t)$, and $u_{0}(t)$ are only available for a fraction of followers.}
\label{as:leader-communication}
\end{assumption}
%
%Consider an undirected graph $\mathcal{G}(\mathcal{V},\mathcal{E})$ of order $N$ where $\mathcal{V} = \{1,\dots,N\}$ is a set of nodes and $\mathcal{E} \subseteq \mathcal{V} \times \mathcal{V}$ is a set of edges.
%An edge $(j,i) \in \mathcal{E}$ indicates that agent $j$ and agent $i$ can obtain information from one another.
%Thus, in an undirected graph, $(i,j) \in \mathcal{E}$ if $(j,i) \in \mathcal{E}$.
%The adjacency matrix $A  = [a_{ij} ] \in \mathbb{R}^{N \times N}$ is defined as
%$a_{ii}=0$, $a_{ij} = a_{ji}$, $a_{ij} >0$ if $(j,i) \in \mathcal{E}$, and $a_{ij}=0$ otherwise.
%From $A$, the Laplacian matrix $L = [l_{ij}] \in \mathbb{R}^{N \times N} $ is defined as $l_{ii} = \sum_{j \neq i}a_{ij}$ and
%$l_{ij} = -a_{ij}$ where $j \neq i$.
%For $\mathcal{G}$, a path is a sequence of ordered edges of the form $(i_{1} , i_{2} ), (i_{2} , i_{3} ), \dots$,
%where $i_{k} \in \mathcal{V}$.

To represent the communication among agents, we consider a graph $\mathcal{G}(\mathcal{V}, \mathcal{E})$
where $\mathcal{V} = \{0,1,\dots,N\}$ and $\mathcal{E} \subseteq \mathcal{V} \times \mathcal{V}$; node $0$ indicates the leader and nodes $1,\dots,N$ denote the followers. There exist undirected edges $(i,j) \in \mathcal{E}$ if $\|p_{i}(t) - p_{j}(t)\| \leq R$ where $p_{i}(t)$ and $p_{j}(t)$ denote the positions of follower $i$ and follower $j$, respectively, and $j \neq i$. Different from the edges among followers, the edges from the leader to a subset of the followers are regarded as directed edges because the leader has an independent motion.

In this paper, a distinct graph $\bar{\mathcal{G}}(\mathcal{V},\bar{\mathcal{E}})$ is introduced to characterize
the controller network where $\bar{\mathcal{E}} \subseteq \mathcal{E}$ \cite{Franchi2012}.
This graph defines the set of information required in the local control system. 
Since the communication among agents is range-limited, $\bar{\mathcal{G}}(\mathcal{V},\bar{\mathcal{E}})$ is a subgraph of  $\mathcal{G}(\mathcal{V}, \mathcal{E})$. 
That is, $\bar{\mathcal{E}} \subseteq \mathcal{E}$. 

To describe the controller network among followers, we consider an undirected graph $\bar{\mathcal{G}}_{F}(\mathcal{V}_{F},\bar{\mathcal{E}}_{F})$ where $\mathcal{V}_{F} = \{1,\dots,N\}$ and $\bar{\mathcal{E}}_{F} \subset \bar{\mathcal{E}}$  Let an adjacency matrix be $A_{F} = [a_{ij}] \in \mathbb{R}^{N \times N}$ for $\bar{\mathcal{G}}_{F}$
and a leader adjacency matrix be $B = \mathrm{diag}\{b_{1},\dots,b_{N}\} \in \mathbb{R}^{N \times N}$ associated with $\bar{\mathcal{G}}$ where $a_{ii} = 0$, $a_{ij} = a_{ji}$, $a_{ij} = 1$ if $(j,i) \in \bar{\mathcal{E}}_{F}$, $a_{ij} = 0$ otherwise, and $b_{i} = 1$ if $(0,i) \in \bar{\mathcal{E}}$ and $b_{i} = 0$ otherwise. 
From $A_{F}$, the Laplacian matrix $L_{F} = [l_{ij}] \in \mathbb{R}^{N \times N} $ is defined as $l_{ii} = \sum_{j \neq i}a_{ij}$ and $l_{ij} = -a_{ij}$ where $j \neq i$. 
According to the two graphs $\mathcal{G}(\mathcal{V},\mathcal{E})$ and $\bar{\mathcal{G}}(\mathcal{V},\bar{\mathcal{E}})$,
the set of neighbors $\mathcal{N}_{i}$ and $\bar{\mathcal{N}}_{i}$ are defined as $\mathcal{N}_{i} = \{j|~ (i,j) \in \mathcal{E} \}$ and $\bar{\mathcal{N}}_{i} = \{j|~ (i,j) \in \bar{\mathcal{E}} \}$, respectively where $i= 1,\dots,N$.
For the control graph $\bar{\mathcal{G}}$, we use the following assumption.

\begin{assumption}
\emph{	\cite{Zhang2015}
		The graph $\bar{\mathcal{G}}$ has a spanning tree with the root node being the leader. 
		That is, there exist paths from the leader to all followers where each path is defined as a sequence of ordered edges of the form $(0 , i_{1} ), (i_{1} , i_{2} ), \dots,(i_{k} , i)$ with $i=1,\dots,N$.
		\label{as:spanning}
	}
\end{assumption}

% \begin{lemma}
	% \cite{Zhang2015}
	% Consider an undirected graph $\mathcal{G}^{c}(\mathcal{V},\mathcal{E}^{c})$
	% where $\mathcal{V} = \{0,1,\dots,N \}$ and $\mathcal{G}^{c}(\mathcal{V},\mathcal{E}^{c})$ has a spanning tree with the root node being node $0$.
	% Let $B$ be the adjacency matrix corresponding to node $0$ of $\mathcal{G}^{c}$ and $L_{F}$ be the Laplacian matrix of a subgraph $\mathcal{G}_{F}^{c}(\mathcal{V}_{F},\mathcal{E}_{F}^{c})$; $\mathcal{V} = \{0,1,\dots,N \}$ and $\mathcal{E}_{F}^{c} \subset \mathcal{E}^{c}$.
	% Then, $L_{F}+B$ is positive definite.
	%\label{le:positive-definite}
	% \end{lemma}

\begin{figure}
	\centering
	\subfigure[]{\epsfig{figure=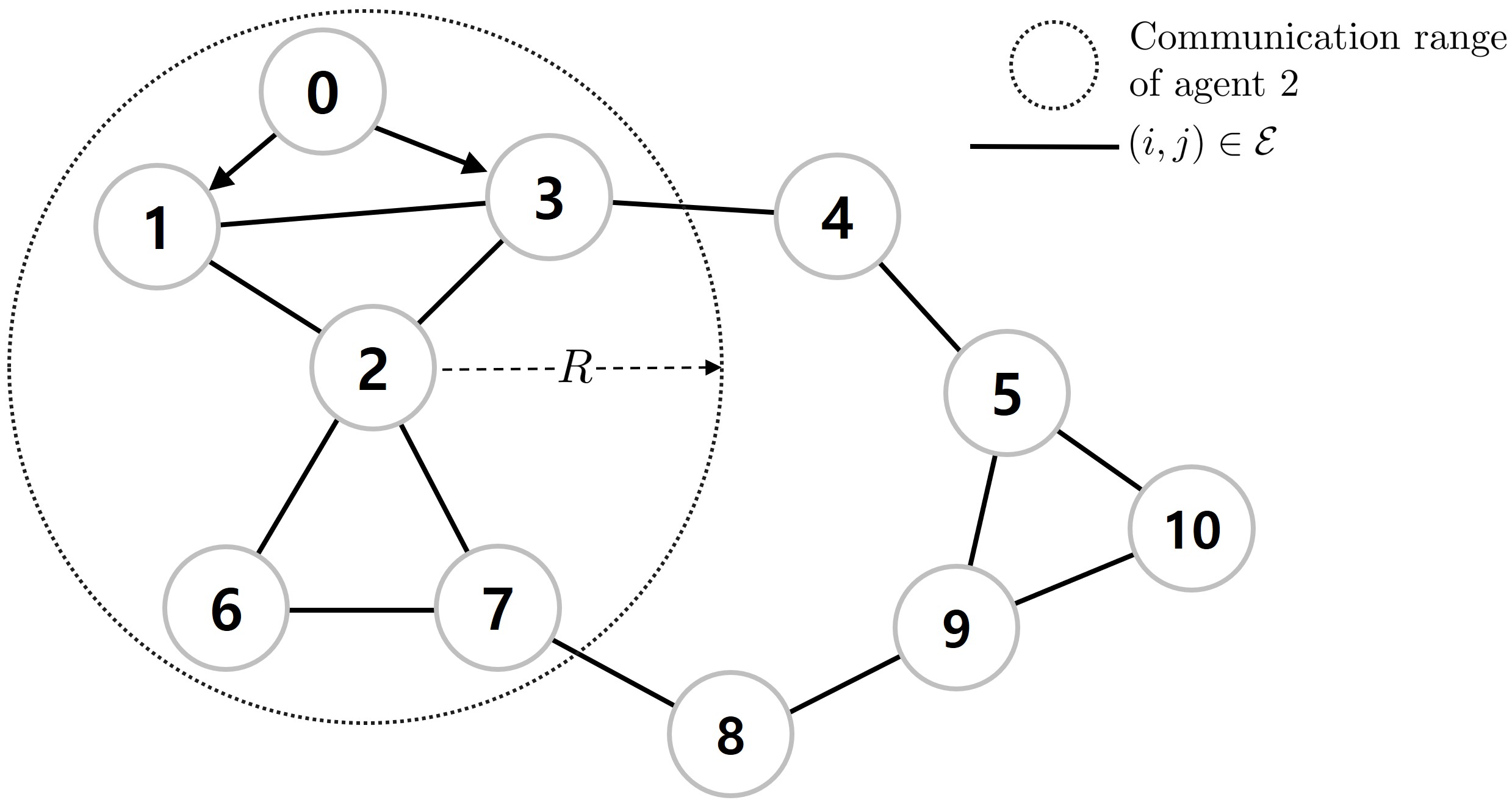, width=7cm,height=4cm}}
	\subfigure[]{\epsfig{figure=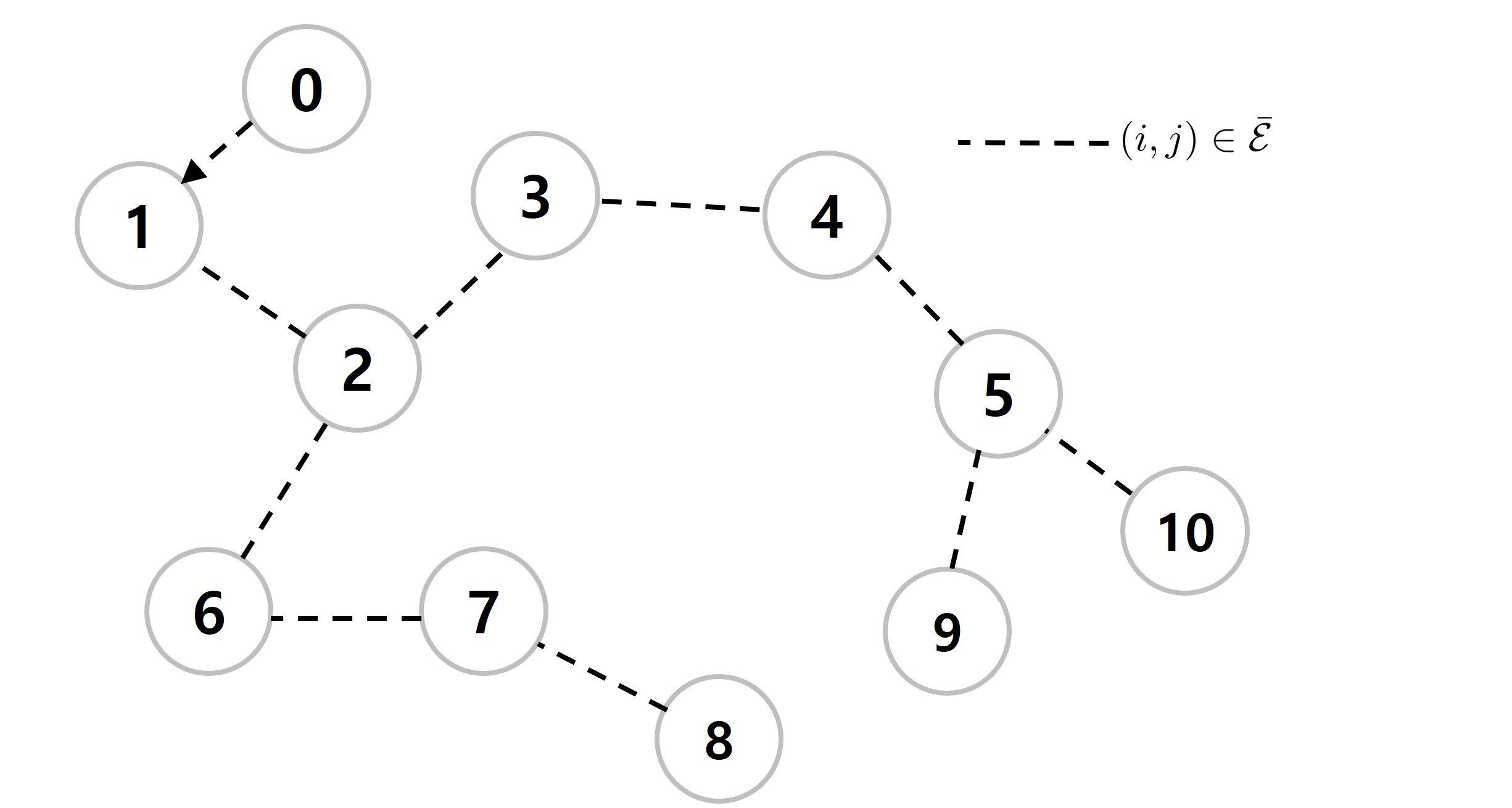, width=7cm,height=4cm}}
	\caption{Example of two network topologies
		(a) $\mathcal{G}(\mathcal{V},\mathcal{E})$
		(b) $\bar{\mathcal{G}}(\mathcal{V},\bar{\mathcal{E}})$.}
	\label{fig:ex-graph}
\end{figure}

Fig. \ref{fig:ex-graph}(a) and \ref{fig:ex-graph}(b) illustrate an example of the communication graph $\mathcal{G}(\mathcal{V}, \mathcal{E})$ 
and the control graph $\bar{\mathcal{G}}(\mathcal{V}, \bar{\mathcal{E}})$, respectively, satisfying Assumptions \ref{as:follower-communication}, \ref{as:leader-communication}, and \ref{as:spanning}. 
From this figure, it is seen that the control graph in Fig. \ref{fig:ex-graph}(b) is a subgraph of the communication graph in Fig. \ref{fig:ex-graph}(a).

% \begin{lemma}
	% \cite{Zhang2012}
	% Under Assumption \ref{as:spanning}, $L_{F} + B$ is nonsingular.
	% \end{lemma}
\subsection{Problem formulation}
In this paper, goal positions $G_{i}$, $i=1,\dots,N$, are defined as the relative desired positions with respect to the leader to form the target shape. For the goal positions, we use the following assumption.
\begin{assumption} 
\emph{\cite{Panagou2020} Each agent can converge to any goal positions with respect to the leader. 
	That is, the goal positions are interchangeable.
\label{as:goal}}
\end{assumption}

\begin{problem}
\emph{Consider the leader-following multi-agent systems consisting of one leader and $N$ followers \eqref{eq:agent} 
under Assumptions \ref{as:follower-communication}--\ref{as:goal}.
Our problem is designing distributed control laws to form the target formation and developing an online goal assignment strategy exchanging goals of the agents to improve the leader-following formation control performance.}
\label{pro:1}
\end{problem}

%Let us define local formation errors $e_{i} = p_{i}(t) - p_{0}(t) - p_{i}^{*}$. 

\begin{remark}
	\emph{
	Different from the existing distributed leader-following formation control schemes \cite{Shi2019,Kang2020,Hua2020,Wang2019,Li2021,Yan2020,Yan2019,Liu2021a,Liang2021,Liu2021b} with fixed goal positions,
	we consider interchangeable goals as stated in Assumption \ref{as:goal}.
	In \cite{Panagou2020,Wang2020}, distributed goal assignment algorithms were presented for multi-agent systems,
	but those cannot be used for Problem \ref{pro:1} because there was no leader and only the first-order agents were handled. 
	Even though leader-following control schemes with goal assignment were proposed in \cite{Kowalczyk2019,Choi2021b},
	these necessitate the network connection from the leader to all followers.
	In conclusion, Problem \ref{pro:1} is not solved by the previous related literature \cite{Shi2019,Kang2020,Hua2020,Wang2019,Li2021,Yan2020,Yan2019,Liu2021a,Liang2021,Liu2021b,Panagou2020,Wang2020,Kowalczyk2019,Choi2021b}.}
\end{remark}

%  \begin{remark}
% Compared with the existing control results reported in the related literature \cite{Xing2017a,Xing2017b,Xing2019,Zhang2020,Ma2019,Qiu2019,Wang2019b,Zhou2019,Choi2020}, this study considers both the quantized state feedback and the event-triggered control problems in the recursive control framework of nonlinear lower-triangular systems.
% Accordingly, Problem \ref{pro:1} cannot be resolved by using the approaches presented in \cite{Xing2017a,Xing2017b,Xing2019,Zhang2020,Ma2019,Qiu2019,Wang2019b,Zhou2019,Choi2020}.
% \label{re:pro}
% \end{remark}

\section{Distributed formation controller Design}
\label{sec:estimator-controller}
\subsection{Leader estimator design}
\label{sec:estimator}
Motivated by the work of Liang et al. \cite{Liang2019}, a distributed leader estimator for each follower is derived as follows:
\begin{align}
\begin{array}{l}
\dot{\hat{p}}_{i,0} = \hat{v}_{i,0} \\
\dot{\hat{v}}_{i,0} = \hat{u}_{i,0} \\
\dot{\hat{u}}_{i,0} = -\gamma_{i,1} \bigg(\sum_{j=1}^{N} \bar{a}_{F,ij}(\hat{p}_{i,0}-\hat{p}_{j,0}) 
+ \bar{b}_{i}(\hat{p}_{i,0} - p_{0}) \bigg)
\\ \qquad ~ -\gamma_{i,2} \bigg(\sum_{j=1}^{N} \bar{a}_{F,ij}(\hat{v}_{i,0} - \hat{v}_{j,0}) + \bar{b}_{i}(\hat{v}_{i,0} - v_{0}) \bigg)
\\ \qquad ~ -\gamma_{i,3} \bigg(\sum_{j=1}^{N} \bar{a}_{F,ij}(\hat{u}_{i,0} - \hat{u}_{j,0}) + \bar{b}_{i}(\hat{u}_{i,0} - u_{0}) \bigg)
\end{array}
\label{eq:estimator}
\end{align}
where $i=1,\dots,N$, $\gamma_{i,1}, \gamma_{i,2}, \gamma_{i,3}>0$ are positive constants denoting the estimator gains, 
and $\hat{p}_{i,0}$, $\hat{v}_{i,0}$, and $\hat{u}_{i,0}$ are the estimated leader position, velocity, and acceleration, respectively.

Let us define local estimation errors $\tilde{p}_{i,0} = \hat{p}_{i,0} - p_{0}$, $\tilde{v}_{i,0} = \hat{v}_{i,0} - v_{0}$, 
and $\tilde{u}_{i,0} = \hat{u}_{i,0} - u_{0}$. 
Then, we get
\begin{align}
\begin{array}{l}
\dot{\tilde{p}}_{0} = \tilde{v}_{0}\\
\dot{\tilde{v}}_{0} = \tilde{u}_{0}\\
\dot{\tilde{u}}_{0} = -\gamma_{1}((\bar{L}_{F}+\bar{B}) \otimes I_{d})\tilde{p}_{0}
-\gamma_{2}((\bar{L}_{F}+\bar{B}) \otimes I_{d})\tilde{v}_{0}
\\ \qquad
-\gamma_{3}((\bar{L}_{F}+\bar{B}) \otimes I_{d})\tilde{u}_{0} - 1_{N} \otimes \dot{u}_{0}
\end{array}
\label{eq:estimator-error}
\end{align}
where
% $\hat{p} = [\hat{p}_{1}^{\top},\dots,\hat{p}_{N}^{\top}]^{\top}$,
% $\hat{v} = [\hat{v}_{1}^{\top},\dots,\hat{v}_{N}^{\top}]^{\top}$,
$\tilde{p}_{0} = [\tilde{p}_{1,0}^{\top},\dots,\tilde{p}_{N,0}^{\top}]^{\top} \in \mathbb{R}^{dN}$,
$\tilde{v}_{0} = [\tilde{v}_{1,0}^{\top},\dots,\tilde{v}_{N,0}^{\top}]^{\top} \in \mathbb{R}^{dN}$,
$\tilde{u}_{0} = [\tilde{u}_{1,0}^{\top},\dots,\tilde{u}_{N,0}^{\top}]^{\top} \in \mathbb{R}^{dN}$,
$I_{d}$ is the $d \times d$ identity matrix, 
$\bar{L}_{F} \in \mathbb{R}^{N \times N}$ is the Laplacian matrix associated with $\bar{\mathcal{G}}_{F}$,
$\bar{B} = \mathrm{diag}\{\bar{b}_{1},\dots,\bar{b}_{N} \}$, $1_{N}$ is $N$-vector of ones, $\gamma_{j} = \mathrm{diag}\{ \gamma_{j,1}, \dots,\gamma_{j,N}\}$ with $j=1,2,3,$, and $\otimes$ indicates the Kronecker product.

Consider a vector $q = [\tilde{p}_{0}^{\top}, \tilde{v}_{0}^{\top},  \tilde{u}_{0}^{\top}]^{\top} \in \mathbb{R}^{3dN}$ and 
$H \triangleq (\bar{L}_{F}+\bar{B}) \otimes I_{d \times d}$.
Then, the time derivative of $\dot{q}$ using \eqref{eq:estimator-error} is expressed by
\begin{align}
\dot{q} = A_{1}q + A_{2}
\label{eq:estimator-error-vec}
\end{align}
where
\begin{align}
\nonumber
A_{1} & =
\left[
\begin{array}{ccc}
0_{dN \times dN} & I_{dN \times dN} & 0_{dN \times dN}\\
0_{dN \times dN} & 0_{dN \times dN} & I_{dN \times dN}\\
-\gamma_{1}H &  -\gamma_{2}H &  -\gamma_{3}H
\end{array}
\right] \in \mathbb{R}^{3dN \times 3dN}, ~~
A_{2} & =
\left[
\begin{array}{c}
0_{dN} \\
0_{dN} \\
-1_{N} \otimes \dot{u}_{0}
\end{array}
\right]  \in \mathbb{R}^{3dN}
\end{align}
Owing to Assumption \ref{as:spanning}, $H$ is positive definite \cite{Zhang2015}, and thus $A_{1}$ is a stable matrix since $\gamma_{i}$, $i=1,2,3$, are positive definite. Therefore, for any matrix $Q>0$, there exists a positive definite matrix $P >0$ satisfying Lyapunov equation
$A_{1}^{\top}P + P A_{1} = -Q$. Following the procedure of Section 3.3 in \cite{Liang2019} and using the practical stability concept, we can easily ensure the uniformly ultimately boundedness of the estimation errors. Note that the estimator \eqref{eq:estimator} can be replaced by other estimators presented in cooperative control problems of multi-agent systems with a leader such as the finite-time estimator in \cite{Zhao2015} or the fixed-time estimator in \cite{Zuo2019}.

\subsection{Controller design via backstepping technique}
\label{sec:controller}
An auxiliary variable $p_{i}^{*}$ is considered to indicate the goal position assigned to the $i$th follower.
Note that the initial value of $p_{i}^{*}$ is set to $G_{i}$ (i.e., $p_{i}^{*}(0) = G_{i}$) for $i=1,\dots,N$ but it could be varied along the goal assignment mechanism which will be shown in the next section. 
To design the local controller, we use the backstepping technique \cite{Krstic1995}. 
Then, two control error surfaces using the estimate leader position are defined as
\begin{align}
\begin{array}{l}
e_{i,1}  = p_{i} - \hat{p}_{i,0} - p_{i}^{*} \\
e_{i,2}  = v_{i} - \zeta_{i}
\end{array}
\label{eq:error_surface}
\end{align}
where $i=1,\dots,N$ and $\zeta_{i}$ indicates the virtual controller of the $i$th follower.

According to the recursive design procedure of backstepping technique, the local virtual and actual controllers are derived as follows:
\begin{align}
\zeta_{i} &= -k_{i,1}e_{i,1} + \hat{v}_{i,0}
\label{eq:virtual_control}
\\
u_{i} &= -k_{i,2}e_{i,2}  - e_{i,1} + \dot{\zeta}_{i}
\label{eq:actual_control}
\end{align}
where $k_{i,1}, k_{i,2}>0$ are control gains.

To analyze the closed-loop stability, let us consider a Lyapunov function as follows: 
\begin{align}
V = \frac{1}{2}e_{1}^{\top}e_{1} + \frac{1}{2}e_{2}^{\top}e_{2}
\label{eq:V}
\end{align}
with $e_{1} = [e_{1,1}^{\top},\dots,e_{N,1}^{\top}]^{\top}$ and $e_{2} = [e_{1,2}^{\top},\dots,e_{N,2}^{\top}]^{\top}$. 
On the other hand, applying \eqref{eq:virtual_control} and \eqref{eq:actual_control} into the derivatives of $e_{i,1}$ and $e_{i,2}$, we get
\begin{align}
\begin{array}{l}
\dot{e}_{i,1}  = -k_{i,1} e_{i,1} + e_{i,2}, \\
\dot{e}_{i,2}  = -k_{i,2} e_{i,2} - e_{i,1}.
\end{array}
\label{eq:dot-error_surface}
\end{align}
Then, the time derivative of $V$ is derived as follows:
\begin{align}
\dot{V} = -e_{1}^{\top}k_{1}e_{1} - e_{2}^{\top}k_{2}e_{2}
\label{eq:dot-V}
\end{align}
where $k_{1} = \mathrm{diag}\{k_{1,1},\dots,k_{N,1} \} \otimes I_{d}$ and $k_{2} = \mathrm{diag}\{k_{1,2},\dots,k_{N,2} \}\otimes I_{d}$.
By letting $k_{m} \triangleq \min_{i=1,\dots,N, j=1,2}\{k_{i,j}\}$, the inequality $\dot{V} \leq -2k_{m}V$ holds. 
Consequently, the asymptotic convergence of the error surfaces $e_{i,1}$ and $e_{i,2}$, $i=1,\dots,N$, is ensured based on the Lyapunov stability theory.
Now, let us define the global formation error $\delta_{i} = p_{i} - p_{0} - p_{i}^{*}$ of the $i$th follower.
From the definition of $e_{i,1}$ and $\tilde{p}_{i,0}$, we get  $\delta_{i} = e_{i,1} + \tilde{p}_{i,0}$. 
Since $\tilde{p}_{i,0}$ is uniformly bounded from Section \ref{sec:estimator} and $e_{i,1}$ converges to zero, 
the global formation error $\delta_{i}$ converges to nearby zero regarding to the magnitude of $\tilde{p}_{i,0}$. Therefore, we can conclude that the desired formation shape is achieved practically. 

\begin{remark}
\emph{
The guideline of the selecting the design parameters of the estimator and the controller is listed as follows:
(i) increasing the estimator gains $\gamma_{i,1}$, $\gamma_{i,2}$, $\gamma_{i,3}$, $i=1,\dots,N$, helps to increase the minimum eigenvalue of $Q$, which reduces the ultimate boundedness of the local estimation errors; (ii) increasing $k_{i,1}$ and $k_{i,2}$ improves the convergence speed of the local formation errors $e_{i,1}$.}
\label{re:gain-selection}
\end{remark}

\section{Distributed goal assignment strategy}
\label{sec:goalassignment}
In order to improve the leader-following formation control performance, this section investigates an online distributed goal assignment problem by assigning proper goals for followers. 
In \cite{Kowalczyk2019,Choi2021b}, goal assignment strategies were suggested
where the goal positions of two selected agents are swapped whenever a sum of the formation errors decreases along with the goal exchange. However, the assignment procedure in \cite{Kowalczyk2019,Choi2021b} cannot be applied to our formation control problem
because it requires that all followers can access the information of the leader.
This condition is available only if the leader communicates with all followers or the leader information is pre-programmed in all followers. However, the pre-programmed approach is of low flexibility to the update of the leader signal in practice. 
Therefore, the previous strategies presented in \cite{Kowalczyk2019,Choi2021b} were based on the centralized communication with the center node being the leader as displayed in Fig. \ref{fig:compare-ctrl-graph}(a).
In contrast, we consider the distributed communication network determined by the communication range and the limited leader information as shown in  \ref{fig:compare-ctrl-graph}(b), which is common in the conventional leader-following formation control schemes \cite{Shi2019,Kang2020,Hua2020,Wang2019,Li2021,Yan2020,Yan2019,Liu2021a,Liang2021,Liu2021b}.
The distributed network in Fig. \ref{fig:compare-ctrl-graph}(b)
is more preferable since centralized communication can encounter a significant bottleneck on the central node
as the number of agents increases. Recently, a distributed goal assignment approach was presented in \cite{Choi2021} but its approach is developed in the sense of distance-based formation control and only the first-order agents were handled. 

\begin{figure}
\centering
\subfigure[]{\epsfig{figure=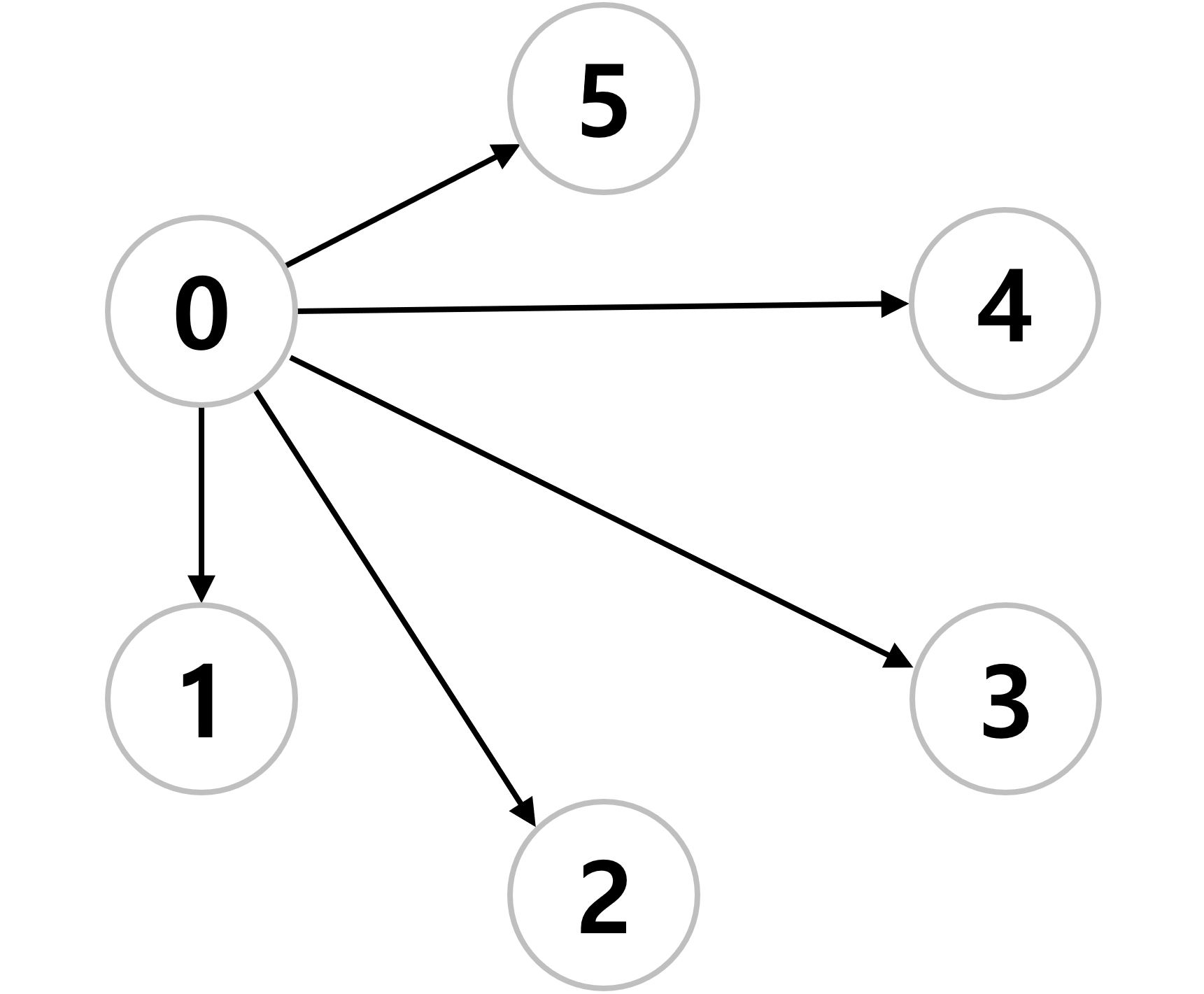, width=4.8cm,height=4cm}}
\subfigure[]{\epsfig{figure=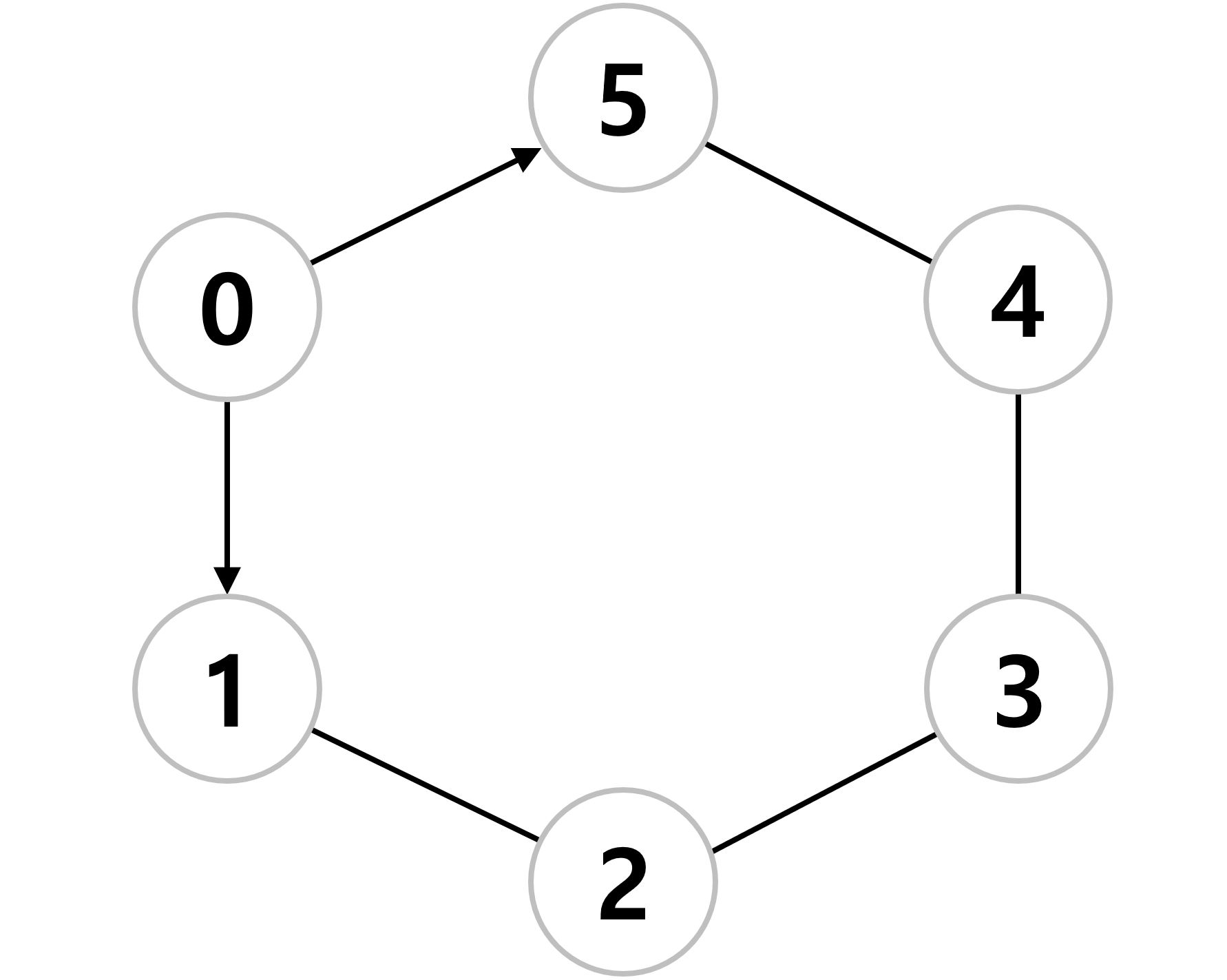, width=4.8cm,height=4cm}}
\caption{Comparison of the communication graph for the same hexagon one leader and five followers
	(a) $\mathcal{G}$ in \cite{Kowalczyk2019}
	(b) $\mathcal{G}$ in the proposed control scheme.}
\label{fig:compare-ctrl-graph}
\end{figure}

Similar to \cite{Kowalczyk2019,Choi2021b}, in the proposed goal assignment procedure, two agents are selected and their goal positions are exchanged if a condition is satisfied. 
However, since the communication range is limited, we should exchange agents' neighbors $\bar{\mathcal{N}}_{i}$ associated with the control graph $\bar{\mathcal{G}}$ as well as their goal positions. 
The spanning tree condition in Assumption \ref{as:spanning} is referred as a basic assumption in the previous leader-following control schemes \cite{Shi2019,Kang2020,Hua2020,Wang2019,Li2021,Yan2020,Yan2019,Liu2021a,Liang2021,Liu2021b}, but this assumption is normally based on the non-exchangeable goals. 
Therefore, Assumption \ref{as:spanning} may not be guaranteed if agents' goals are exchanged in the presence of the limited communication range. 
To handle this issue, we propose an assignment strategy including the update of the set of their neighbors $\bar{\mathcal{N}}_{i}$ associated with the control graph $\bar{\mathcal{G}}$. 
To represent this transition, we use time-dependent variables $\mathcal{N}_{i}(t)$, $p_{i}^{*}(t)$, and $\bar{\mathcal{N}}_{i}(t)$. Let agents $\alpha$ and $\beta$ be the selected two agents in the assignment strategy. Then, a basic assumption of the proposed assignment strategy is given as follows:

\begin{assumption}
\emph{
	For agents $\alpha$ and $\beta$, it holds that $\bar{\mathcal{N}}_{\alpha}(t)-\{\beta\} \subset \mathcal{N}_{\beta}(t)$ and $\bar{\mathcal{N}}_{\beta}(t)-\{\alpha\} \subset \mathcal{N}_{\alpha}(t)$.}
	\label{as:goal-communication}
\end{assumption}

\begin{remark}
	\emph{
	Assumption \ref{as:goal-communication} means that one of the two selected agents can get the information of not only its neighbors
	but also the other's neighbors defined in the control graph $\bar{\mathcal{G}}(t)$.
	This assumption is mandatory in the proposed assignment process to maintain the spanning tree condition of $\bar{\mathcal{G}}(t)$ (i.e., Assumption \ref{as:spanning}) 
	because the goal exchange of two agents basically leads to the change of 
	$\bar{\mathcal{G}}(t)$ by updating their neighbors $\bar{\mathcal{N}}_{i}(t)$.
	For a better understanding of Assumption \ref{as:goal-communication}, let us see Fig. \ref{fig:ex-graph-assumption} illustrating the communication and control graphs of seven agents.
	Note that Assumption \ref{as:goal-communication} is satisfied for agents $4$ and $5$ because
	\begin{align}
		\begin{array}{l}
			\bar{\mathcal{N}}_{4}(t) = \{1, 5\}, \quad  \mathcal{N}_{4}(t) = \{1,2,3,5\}, \\
			\bar{\mathcal{N}}_{5}(t) = \{2, 4\}, \quad  \mathcal{N}_{5}(t) = \{1,2,4,6 \}.
		\end{array}
	\end{align}
}
\end{remark}

\begin{figure}
	\centering
	\subfigure[]{\epsfig{figure=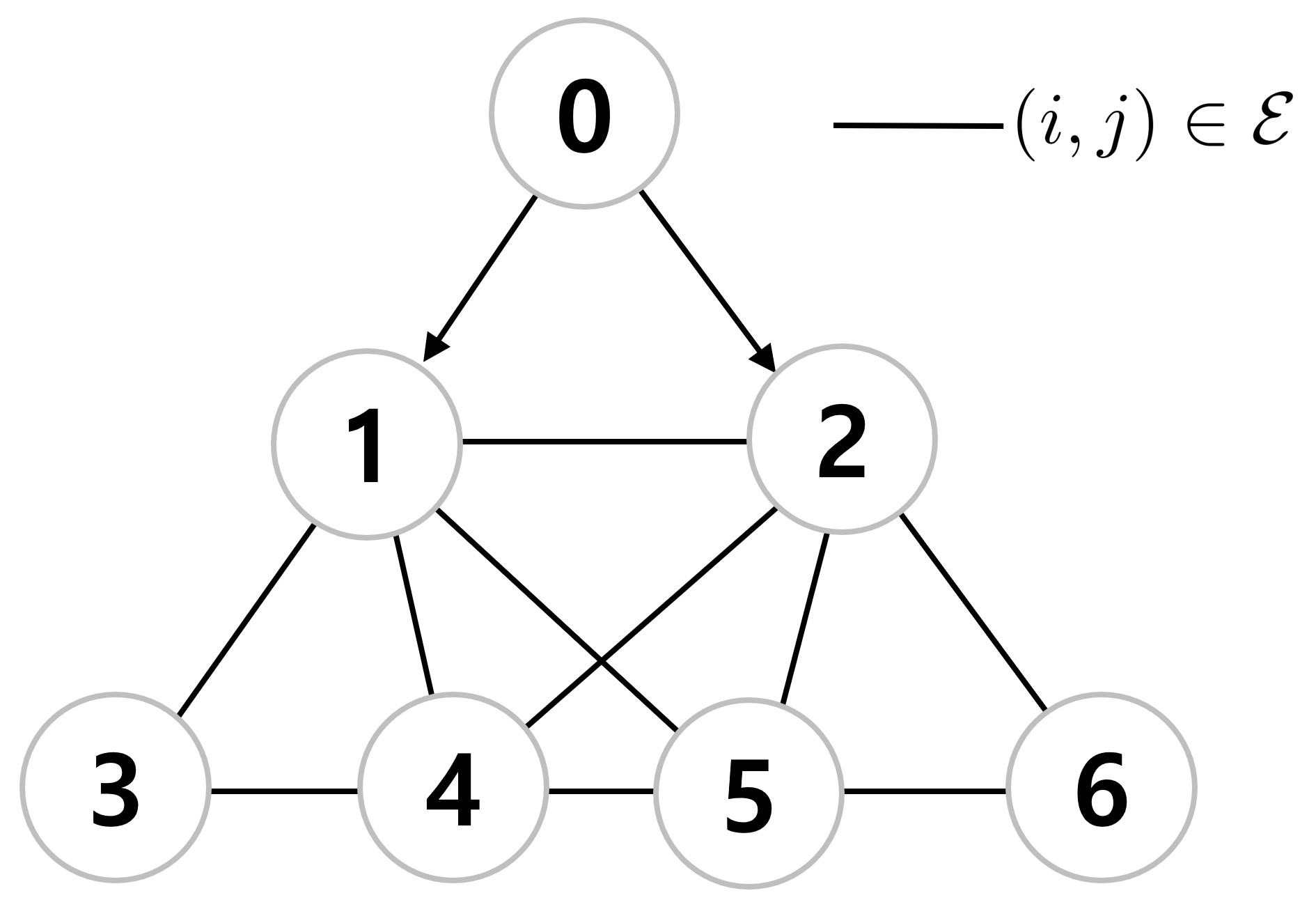, width=5cm,height=3.5cm}}
	\subfigure[]{\epsfig{figure=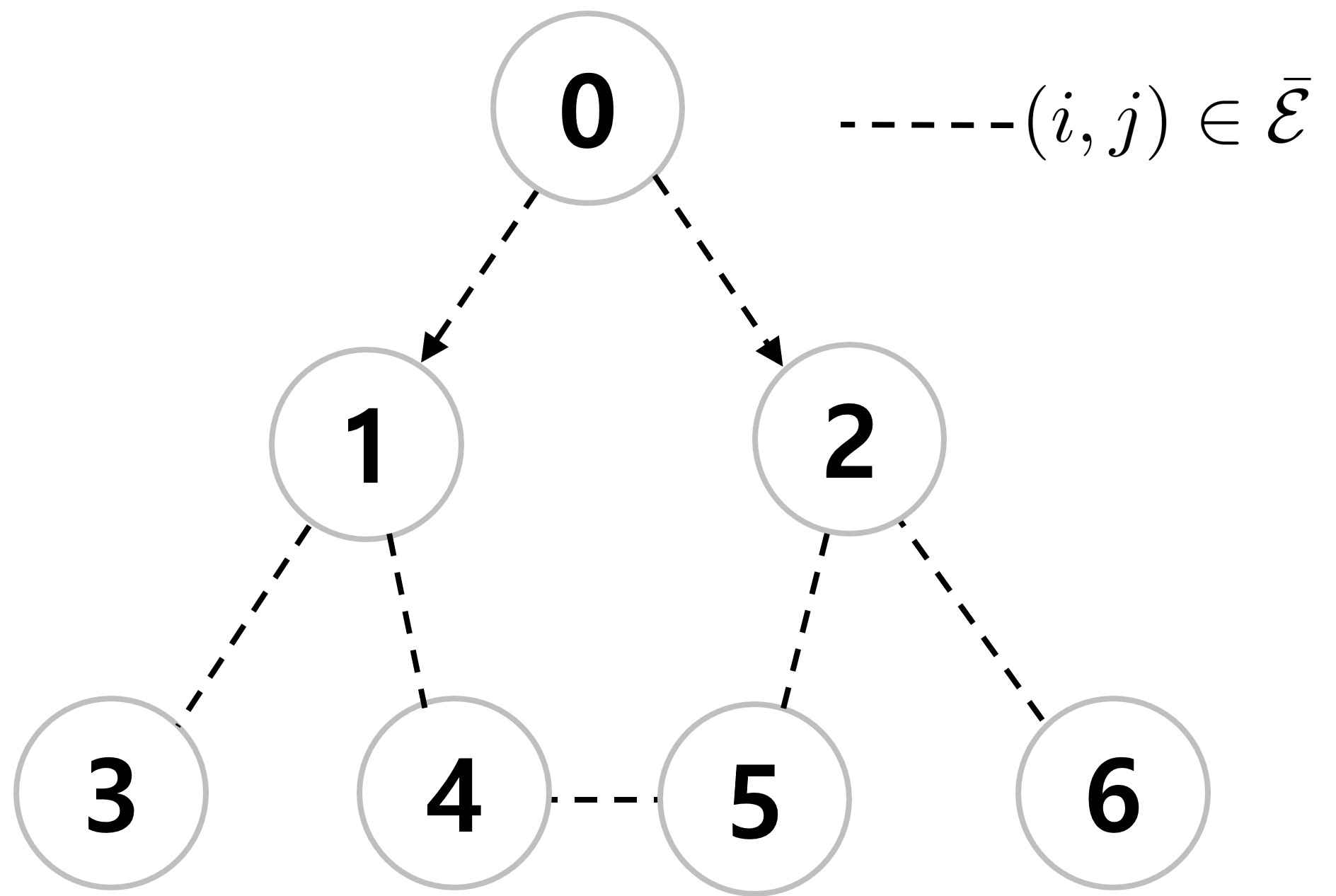, width=5cm,height=3.5cm}}
	\caption{Communication and control graphs satisfying Assumption \ref{as:goal-communication} for $\alpha = 4$ and $\beta = 5$
		(a) $\mathcal{G}(\mathcal{V},\mathcal{E})$
		(b) $\bar{\mathcal{G}}(\mathcal{V},\bar{\mathcal{E}})$.}
	\label{fig:ex-graph-assumption}
\end{figure}

Using Assumption \ref{as:goal-communication}, the proposed distributed online goal assignment algorithm is presented in Algorithm \ref{alg:goal}. Simply, Algorithm \ref{alg:goal} consists of four steps.

\emph{Step 1}: Whenever $t = t_{k}$, a pair of agents is selected and it is monitored whether Assumption \ref{as:goal-communication} is satisfied. Here, $t_{k}$, $k=1,2,\dots$, indicate predetermined periodic instants executing the goal assignment algorithm. 
It should be stressed that $t_{k}$ is selected regardless of the control sampling time \cite{Kowalczyk2019}.

\emph{Step 2:} If the assumption is correct, temporary variables $\breve{\bar{\mathcal{N}}}_{\alpha}$, $\breve{\bar{\mathcal{N}}}_{\beta}$, $\breve{\bar{\mathcal{N}}}_{m}$, $\breve{p}_{\alpha}^{*}$, and $\breve{p}_{\beta}^{*}$ under the exchanged goals of agent $\alpha$ and agent $\beta$ are defined
where $m \in \bar{\mathcal{N}}_{\alpha \cup \beta}(t_{k}^{-})$; $\bar{\mathcal{N}}_{\alpha \cup \beta}(t_{k}^{-}) \triangleq \bar{\mathcal{N}}_{\alpha}(t_{k}^{-}) \cup \bar{\mathcal{N}}_{\beta}(t_{k}^{-}) - \{\alpha\} - \{\beta\}$ and $t_{k}^{-} \triangleq \lim_{\varepsilon \rightarrow 0^{+}}t_{k} - \varepsilon$.
Since the neighbors of agents $\alpha$ and $\beta$ will be updated along with the goal exchange and the communication among agents is undirected, 
it is necessary to take account of $\bar{\mathcal{N}}_{m}$ and $\breve{\bar{\mathcal{N}}}_{m}$. According to the following rules
\begin{align}
\begin{array} {l}
	\breve{\bar{\mathcal{N}}}_{\alpha} = \Big\{
	\begin{array}{l}
		\bar{\mathcal{N}}_{\beta}(t_{k}^{-}) - \{\alpha\} + \{\beta\},~~ \mathrm{if}~ \alpha \in \bar{\mathcal{N}}_{\beta}(t_{k}^{-}), \\
		\bar{\mathcal{N}}_{\beta}(t_{k}^{-}), \qquad \qquad \quad ~~~~ \mathrm{if}~ \alpha \notin \bar{\mathcal{N}}_{\beta}(t_{k}^{-}),
	\end{array}
	\\
	\breve{\bar{\mathcal{N}}}_{\beta} =  \Big\{
	\begin{array}{l}
		\bar{\mathcal{N}}_{\alpha}(t_{k}^{-}) - \{\beta\} + \{\alpha\}, ~~\mathrm{if}~ \beta \in \bar{\mathcal{N}}_{\alpha}(t_{k}^{-}), \\
		\bar{\mathcal{N}}_{\alpha}(t_{k}^{-}), \qquad \qquad \quad ~~~~ \mathrm{if}~ \beta \notin \bar{\mathcal{N}}_{\alpha}(t_{k}^{-}),
	\end{array}
	\\
	\breve{p}_{\alpha}^{*} = p_{\beta}^{*}(t_{k}^{-}), \quad  \breve{p}_{\beta}^{*} = p_{\alpha}^{*}(t_{k}^{-}),
	% \\
	% \breve{\mathcal{N}}_{m}^{c} = \left\{
	% \begin{array}{l}
		% \mathcal{N}_{m}^{c}(t) - \{\alpha\} + \{\beta\}, ~ \mathrm{if} ~ m \in \mathcal{N}_{\alpha}^{c}(t)
		% \\ \qquad \qquad \qquad \qquad ~~~
		% \mathrm{and} ~ m \notin \mathcal{N}_{\beta}^{c}(t),
		% \\
		% \mathcal{N}_{m}^{c}(t) - \{\beta\} + \{\alpha\}, ~  \mathrm{if}~ m \notin \mathcal{N}_{\alpha}^{c}(t)
		% \\ \qquad \qquad \qquad \qquad ~~~
		% \mathrm{and} ~ m \in \mathcal{N}_{\beta}^{c}(t),
		% \\
		% \mathcal{N}_{m}^{c}(t), ~ \mathrm{if}~ m \in \mathcal{N}_{\beta}^{c}(t) ~
		% \mathrm{and} ~ m \in \mathcal{N}_{\alpha}^{c}(t).
		% \end{array}
	% \right.
\end{array}
\label{eq:Ni-p-breve}
\end{align}
we get $\breve{\bar{\mathcal{N}}}_{\alpha}$, $\breve{\bar{\mathcal{N}}}_{\beta}$, $\breve{p}_{\alpha}^{*}$, and $\breve{p}_{\beta}^{*}$. From $\breve{\bar{\mathcal{N}}}_{\alpha}$ and $\breve{\bar{\mathcal{N}}}_{\beta}$, $\breve{\bar{\mathcal{N}}}_{m}$ are determined using the bidirectional edges in 
the control graph $\bar{\mathcal{G}}(t)$.

\emph{Step 3:} Two compounded errors $e_{cur}$ and $e_{new}$ are computed where $e_{cur}$ indicates a sum of the local errors under the current goals $p_{\alpha}^{*}$ and $p_{\beta}^{*}$ and $e_{new}$ denotes a sum of the errors under the new goals $\breve{p}_{\alpha}^{*}$ and $\breve{p}_{\beta}^{*}$.
Here, $\breve{e}_{\alpha,j}$ and $\breve{e}_{\beta,j}$ with $j=1,2$ are defined as
\begin{align}
\nonumber
\begin{array}{l}
	\breve{e}_{\alpha,1} = p_{\alpha}(t_{k}^{-}) - \hat{p}_{\alpha,0}(t_{k}^{-}) - \breve{p}_{\alpha}^{*}, \quad 
	\breve{e}_{\alpha,2} = v_{\alpha}(t_{k}^{-}) - \breve{\zeta}_{\alpha}, \\
	\breve{e}_{\beta,1} = p_{\beta}(t_{k}^{-}) - \hat{p}_{\beta,0}(t_{k}^{-}) - \breve{p}_{\beta}^{*},
	\quad
	\breve{e}_{\beta,2} = v_{\beta}(t_{k}^{-}) - \breve{\zeta}_{\beta},
\end{array}
\end{align}
where 
\begin{align}
	\nonumber 
	\begin{array}{l}
	\breve{\zeta}_{\alpha} = -k_{\alpha,1}\breve{e}_{\alpha,1} + \hat{v}_{\alpha,0}(t_{k}^{-}), \quad
	\breve{\zeta}_{\beta} = -k_{\beta,1}\breve{e}_{\beta,1} + \hat{v}_{\beta,0}(t_{k}^{-}).
	\end{array}
\end{align}

\emph{Step 4:} 
If the exchanging condition $e_{cur} > e_{new}$ is satisfied, the two goals are swapped by updating
$\bar{\mathcal{N}}_{\alpha}(t)$, $\bar{\mathcal{N}}_{\beta}(t)$, $\bar{\mathcal{N}}_{m}(t)$, $p_{\alpha}^{*}(t)$ and  $p_{\beta}^{*}(t)$ as the temporary variables
$\breve{\bar{\mathcal{N}}}_{\alpha}$, $\breve{\bar{\mathcal{N}}}_{\beta}$, $\breve{\bar{\mathcal{N}}}_{m}$, $\breve{p}^{*}_{\alpha}$, and $\breve{p}^{*}_{\beta}$, respectively and $t_{k}$ is recorded as a new exchanging moment $\tau_{g}$. There updated information holds until the next goal exchange occurs.

\begin{algorithm}
\SetKwInOut{Input}{input}
\SetKwInOut{Output}{output}
\caption{Distributed Goal Assignment Algorithm}
\label{alg:goal}
\Input{$t_{k}$, $\alpha$, $\beta$, $\bar{\mathcal{N}}_{\alpha}(t_{k}^{-})$, $\bar{\mathcal{N}}_{\beta}(t_{k}^{-})$, $\bar{\mathcal{N}}_{m}(t_{k}^{-})$, $p_{\alpha}^{*}(t_{k}^{-})$, $p_{\beta}^{*}(t_{k}^{-})$ ($m \in \bar{\mathcal{N}}_{\alpha \cup \beta}(t_{k}^{-}))$}
\Output{$\tau_{g}$, $\bar{\mathcal{N}}_{\alpha}(t)$, $\bar{\mathcal{N}}_{\beta}(t)$, $\bar{\mathcal{N}}_{m}(t)$, $p_{\alpha}^{*}(t)$, $p_{\beta}^{*}(t)$ ($t \geq t_{k}$)}
%\label{array-sum}
\If{Assumption \ref{as:goal-communication} holds for agents $\alpha$ and $\beta$ }{
	Define $\breve{\bar{\mathcal{N}}}_{\alpha}$, $\breve{\bar{\mathcal{N}}}_{\beta}$, $\breve{\bar{\mathcal{N}}}_{m}$,
	$\breve{p}_{\alpha}^{*}$, and $\breve{p}_{\beta}^{*}$ under the exchanged goals of agent $\alpha$ and agent $\beta$;\\
	Compute
	$e_{cur} \leftarrow \sum_{j=1}^{2}(\|e_{\alpha,j}(t_{k}^{-})\|^{2} + \| e_{\beta,j}(t_{k}^{-})\|^{2})$
	and
	$e_{new} \leftarrow \sum_{j=1}^{2}(\|\breve{e}_{\alpha,j}\|^{2} + \|\breve{e}_{\beta,j}\|^{2})$; \\
	\If{$e_{cur}>e_{new}$}{	 						
		$\tau_{g} \leftarrow t_{k}$; $g \leftarrow g+1$; $k \leftarrow k+1$; 
		Update $\bar{\mathcal{N}}_{\alpha}(t)$, $\bar{\mathcal{N}}_{\beta}(t)$, $\bar{\mathcal{N}}_{m}(t)$, $p_{\alpha}^{*}(t)$, $p_{\beta}^{*}(t)$ as
		$\bar{\mathcal{N}}_{\alpha}(t) \leftarrow \breve{\bar{\mathcal{N}}}_{\alpha}$,
		$\bar{\mathcal{N}}_{\beta}(t) \leftarrow \breve{\bar{\mathcal{N}}}_{\beta}$,
		$\bar{\mathcal{N}}_{m}(t) \leftarrow \breve{\bar{\mathcal{N}}}_{m}$,
		$p^{*}_{\alpha}(t) \leftarrow \breve{p}^{*}_{\alpha}$,    $p^{*}_{\beta}(t) \leftarrow \breve{p}^{*}_{\beta}$;
	}\Else{$k \leftarrow k+1$; Keep $\bar{\mathcal{N}}_{\alpha}(t)$, $\bar{\mathcal{N}}_{\beta}(t)$, $\bar{\mathcal{N}}_{m}(t)$, $p_{\alpha}^{*}(t)$, $p_{\beta}^{*}(t)$ }
}\Else{$k \leftarrow k+1$; Keep $\bar{\mathcal{N}}_{\alpha}(t)$, $\bar{\mathcal{N}}_{\beta}(t)$, $\bar{\mathcal{N}}_{m}(t)$, $p_{\alpha}^{*}(t)$, $p_{\beta}^{*}(t)$}
\end{algorithm}

 Our main result is summarized by the following theorem.
\begin{theorem}
Consider multi-agent systems controlled by the backstepping control law \eqref{eq:virtual_control} and \eqref{eq:actual_control} with the estimator \eqref{eq:estimator} under Assumptions \ref{as:leader-bound}--\ref{as:goal}.
If there exists a pair of agents satisfying Assumption \ref{as:goal-communication} and there exist $\tau_{g}$ such that $e_{cur}(\tau_{g}) > e_{new}(\tau_{g})$ for $g=1,2,\dots$, the leader-following formation control performance can be improved with properly assigned goal positions according to Algorithm \ref{alg:goal}.
\label{th:final}
\end{theorem}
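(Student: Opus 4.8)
The plan is to treat the closed loop as a hybrid system whose continuous flow is governed by the controller--estimator dynamics of Section \ref{sec:estimator-controller} and whose discrete jumps occur at the accepted exchange instants $\tau_{g}$, and to use the Lyapunov function $V$ in \eqref{eq:V} as a common measure of the formation error across both regimes. The argument splits into two essentially independent pieces: (a) a successful swap must not destroy the hypotheses under which \eqref{eq:dot-V} was derived, so that $V$ still decays between jumps; and (b) each successful swap must produce a strictly negative jump in $V$.

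For piece (a) I would first show that the neighbor--update rule \eqref{eq:Ni-p-breve} together with Assumption \ref{as:goal-communication} keeps the control graph admissible. Concretely, $\breve{\bar{\mathcal{N}}}_{\alpha}$ assigns to agent $\alpha$ the former control neighbors of $\beta$; by $\bar{\mathcal{N}}_{\beta}(t_{k}^{-})-\{\alpha\}\subset\mathcal{N}_{\alpha}(t_{k}^{-})$ these nodes lie within the communication range $R$ of $\alpha$, so every new control edge is a legitimate communication edge and $\bar{\mathcal{E}}\subseteq\mathcal{E}$ is preserved (symmetrically for $\beta$). Since \eqref{eq:Ni-p-breve} amounts to the transposition that exchanges the labels of $\alpha$ and $\beta$ --- the clauses $+\{\beta\}$ and $+\{\alpha\}$ are exactly what keep the edge $(\alpha,\beta)$ consistent, and the $\breve{\bar{\mathcal{N}}}_{m}$ are updated to respect undirectedness --- the updated graph $\bar{\mathcal{G}}(\tau_{g}^{+})$ is isomorphic to $\bar{\mathcal{G}}(\tau_{g}^{-})$ and hence still satisfies the spanning--tree condition of Assumption \ref{as:spanning}. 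Consequently $H=(\bar{L}_{F}+\bar{B})\otimes I_{d}$ stays positive definite and the estimator \eqref{eq:estimator} remains uniformly ultimately bounded on $[\tau_{g},\tau_{g+1})$. Moreover the error--surface dynamics \eqref{eq:dot-error_surface} are structurally independent of the neighbor sets (the estimator terms cancel in $\dot{e}_{i,1}$ and $\dot{e}_{i,2}$), so \eqref{eq:dot-V} and the bound $\dot{V}\le-2k_{m}V$ continue to hold verbatim on each inter--jump interval.

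For piece (b) the key observation is that at $t=\tau_{g}$ the only quantities that change are the discrete variables $p_{i}^{*}$ and the neighbor sets; all continuous states $p_{i},v_{i},\hat{p}_{i,0},\hat{v}_{i,0},\hat{u}_{i,0}$ are unchanged. Because $e_{i,1}=p_{i}-\hat{p}_{i,0}-p_{i}^{*}$ and $e_{i,2}=v_{i}-\zeta_{i}$ depend on another agent only through $p_{i}^{*}$, and since the swap alters $p_{i}^{*}$ solely for $i\in\{\alpha,\beta\}$, every error surface with $i\neq\alpha,\beta$ --- including those of the affected neighbors $m$ --- is continuous across $\tau_{g}$. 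Hence the jump in $2V$ collapses onto the $\alpha,\beta$ terms,
\begin{align}
2V(\tau_{g}^{+})-2V(\tau_{g}^{-})
=\sum_{j=1}^{2}\big(\|\breve{e}_{\alpha,j}\|^{2}+\|\breve{e}_{\beta,j}\|^{2}\big)
-\sum_{j=1}^{2}\big(\|e_{\alpha,j}(\tau_{g}^{-})\|^{2}+\|e_{\beta,j}(\tau_{g}^{-})\|^{2}\big)
=e_{new}(\tau_{g})-e_{cur}(\tau_{g}),
\nonumber
\end{align}
which is strictly negative precisely by the triggering condition $e_{cur}(\tau_{g})>e_{new}(\tau_{g})$ of Algorithm \ref{alg:goal}.

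Combining the two pieces, $V$ decays exponentially on each interval and is reset strictly downward, $V(\tau_{g}^{+})=V(\tau_{g}^{-})-\tfrac12\big(e_{cur}(\tau_{g})-e_{new}(\tau_{g})\big)<V(\tau_{g}^{-})$, at every accepted swap. Thus each exchange instantaneously lowers the aggregate error $2V=\sum_{i}(\|e_{i,1}\|^{2}+\|e_{i,2}\|^{2})$ below the value it would retain under the unswapped goals, and since $\delta_{i}=e_{i,1}+\tilde{p}_{i,0}$ with $\tilde{p}_{i,0}$ uniformly bounded, the transient formation error is correspondingly reduced --- the claimed improvement. I expect the main obstacle to be piece (a): rigorously certifying that the simultaneous exchange of goals and neighbors leaves the spanning--tree structure intact, since this is exactly the property that the limited communication range threatens and that Assumption \ref{as:goal-communication} is introduced to rescue. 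Once the graph isomorphism and the continuity of the uninvolved error surfaces are established, the jump computation in piece (b) is essentially bookkeeping.
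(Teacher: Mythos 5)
Your proposal is correct, and its core mechanism is the same one the paper uses: continuity of the states and estimator states at an accepted exchange, so that the condition $e_{cur}(\tau_{g})>e_{new}(\tau_{g})$ forces a strict drop of the Lyapunov function at $\tau_{g}$, combined with monotone decay between exchanges, yielding $V(0)>V(\tau_{1})>V(\tau_{2})>\cdots$. The packaging differs in two respects, both to your credit. First, the paper does not compute the jump of a single hybrid $V$; it instead defines, per exchange, the paired functions $V_{cur,g}$ and $V_{new,g}$ in \eqref{eq:V_cur_V_new}, cancels the untouched terms via $\bar{V}_{cur,g}(\tau_{g})=\bar{V}_{new,g}(\tau_{g})$ (exactly your observation that the jump collapses onto the $\alpha,\beta$ terms), argues that $V_{cur,g}(t)>V_{new,g}(t)$ persists for $t\geq\tau_{g}$ because both families of error surfaces obey the common dynamics \eqref{eq:dot-error_surface} and hence decay ``with the same speed,'' and finally assembles the piecewise function \eqref{eq:V-piecewise-const} with $V_{cur,g+1}=V_{new,g}$. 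Your direct jump identity $2V(\tau_{g}^{+})-2V(\tau_{g}^{-})=e_{new}(\tau_{g})-e_{cur}(\tau_{g})$ is actually slightly more robust than the paper's persistence step: when $k_{i,1}\neq k_{i,2}$, the quantity $\tfrac{1}{2}(\|e_{i,1}\|^{2}+\|e_{i,2}\|^{2})$ has derivative $-k_{i,1}\|e_{i,1}\|^{2}-k_{i,2}\|e_{i,2}\|^{2}$, which is only sandwiched between two exponential rates rather than a single one, so the ``same speed'' claim is heuristic, whereas your route needs only $\dot{V}\leq 0$ between jumps. Second, your piece (a) --- that the update rule \eqref{eq:Ni-p-breve} is the transposition swapping the labels $\alpha$ and $\beta$, that every new control edge is a genuine communication edge by Assumption \ref{as:goal-communication}, and that the post-swap control graph is therefore isomorphic to the pre-swap one so Assumption \ref{as:spanning} and positive definiteness of $H$ survive --- is entirely absent from the paper's proof of Theorem \ref{th:final}, which leaves graph preservation implicit in Assumption \ref{as:goal-communication} and the surrounding remark; making it explicit is a genuine strengthening, since it is precisely what licenses reusing \eqref{eq:dot-V} on each interval $[\tau_{g},\tau_{g+1})$. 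The only trade-off is that the paper's paired-function formulation, where it goes through, asserts the stronger comparative statement that the swapped trajectory dominates the unswapped counterfactual for all future time, while your version certifies the instantaneous drop plus subsequent decay; for the theorem as stated, either suffices.
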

\begin{proof}
Firstly, let us assume that agents $\alpha$ and $\beta$ swap their goals at $t = \tau_{1}$. 
Then, along their current goals and the new (i.e., exchanged) goals, two different Lyapunov function candidates $V_{cur,1}(t)$ and $V_{new,1}(t)$ are defined as follows:
\begin{align}
	\begin{array}{l}
	V_{cur,1}(t) = \frac{1}{2}\big(\|p_{\alpha}(t) - \hat{p}_{\alpha,0}(t) - p_{\alpha}^{*} \|^{2} 
	+ \|v_{\alpha}(t) - \zeta_{\alpha}(t) \|^{2} 
	\\ 
	\qquad \qquad \quad +\|p_{\beta}(t) - \hat{p}_{\beta,0}(t) - p_{\beta}^{*} \|^{2}  + \|v_{\beta}(t) - \zeta_{\beta}(t) \|^{2}	
	\big) + \bar{V}_{cur,1}(t)
	\\
	V_{new,1}(t) = \frac{1}{2}\big(\|p_{\alpha}(t) - \hat{p}_{\alpha,0}(t) - p_{\beta}^{*} \|^{2} 
	+ \|v_{\alpha}(t) - \zeta_{\alpha}(t) \|^{2} 
	\\ 
	\qquad \qquad \quad + \|p_{\beta}(t) - \hat{p}_{\beta,0}(t) - p_{\alpha}^{*} \|^{2}  + \|v_{\beta}(t) - \zeta_{\beta}(t) \|^{2}	
	\big) + \bar{V}_{new,1}(t)
	\end{array}
\label{eq:V_cur_V_new}
\end{align}
Here, $\bar{V}_{cur,1}(t)$ and $\bar{V}_{new,1}(t)$ indicate the remaining terms. 
Since the continuity of the states (i.e., $p_{i}(t)$ and $v_{i}(t)$) and estimated leader information (i.e., $\hat{p}_{i,0}(t)$, $\hat{v}_{i,0}(t)$, and $\hat{u}_{i,0}(t)$) is preserved when the goals are swapped, it holds that $\bar{V}_{cur,1}(\tau_{1})$ = $\bar{V}_{new,1}(\tau_{1})$. Then, from the definitions of $e_{cur}$ and $e_{new}$ and the exchanging condition in Algorithm \ref{alg:goal},
it leads to $V_{cur,1}(\tau_{1}) > V_{new,1}(\tau_{1})$.

Now, we will show that $V_{cur,1}(t) > V_{new,1}(t)$ for $t \geq \tau_{1}$. 
Even if the goals are exchanged, the dynamics of the error surfaces in \eqref{eq:dot-error_surface} is not changed. 
Thus, the error surfaces under the current goals and those under the new goals exponentially converge to zero with different initial values but with same speed regarding to the control gains. 
Therefore, from \eqref{eq:V_cur_V_new}, $V_{cur,1}(t)$ and $V_{new,1}(t)$ have same convergence speed for $t \geq \tau_{1}$. 
Owing to $V_{cur,1}(\tau_{1}) > V_{new,1}(\tau_{1})$, the inequality $V_{cur,1}(t) > V_{new,1}(t)$ is guaranteed for $t \geq \tau_{1}$. Applying this property recursively, 
we have 
\begin{align}
	V_{cur,g}(t) > V_{new,g}(t), ~~~ t \geq \tau_{g}.
\end{align}
where $g = 1,2,\dots$.

Let the Lyapunov function $V(t)$ be
\begin{align}
	V(t) = \left\{
	\begin{array}{l}
		V_{cur,1}(t), \quad t \in [t_{0}, \tau_{1}) \\
		V_{cur,2}(t), \quad t \in [\tau_{1}, \tau_{2}) \\
		V_{cur,3}(t), \quad t \in [\tau_{2}, \tau_{3})	\\
		~~~~ \vdots  \qquad \qquad ~~ \vdots
	\end{array}
	\right.
	\label{eq:V-piecewise-const}
\end{align}
Note that $V_{cur,g}(t) = V_{new,g-1}(t)$. Then, we can conclude that 
\begin{align}
	V(0) > V(\tau_{1}) > V(\tau_{2}) > \cdots.
\end{align}
This implies that the proposed strategy improves the leader-following formation control performance since 
a discrete jump to a lower value of the Lyapunov function happens whenever the goals are exchanged. 

\end{proof}

\begin{remark}
\label{re:comp}
\emph{
In the previous goal assignment algorithm in \cite{Choi2021b},
the assignment condition is defined as $e_{cur} > e_{new}$ where
\begin{align}
	\begin{array}{l}
		e_{cur} = \|p_{\alpha}(t_{k}^{-}) - p_{0}(t_{k}^{-}) - p_{\alpha}^{*}(t_{k}^{-})\|^{2}
		+ \|v_{\alpha}(t_{k}^{-}) - \zeta_{\alpha}(t_{k}^{-})\|^{2}
		\\
		\qquad  \quad
		+ \|p_{\beta}(t_{k}^{-}) - p_{0}(t_{k}^{-}) - p_{\beta}^{*}(t_{k}^{-})\|^{2}  + \|v_{\beta}(t_{k}^{-}) - \zeta_{\beta}(t_{k}^{-})\|^{2},
		\\
		e_{new} = \|p_{\alpha}(t_{k}^{-}) - p_{0}(t_{k}^{-}) - \breve{p}_{\alpha}^{*}\|^{2}
		+ \|v_{\alpha}(t_{k}^{-}) - \breve{\zeta}_{\alpha}\|^{2}		
		\\
		\qquad  \quad
		+\|p_{\beta}(t_{k}^{-}) - p_{0}(t_{k}^{-}) - \breve{p}_{\beta}^{*}\|^{2}
		+ \|v_{\beta}(t_{k}^{-}) - \breve{\zeta}_{\beta}\|^{2}	.
	\end{array}
\end{align}
Here, $\breve{p}_{\alpha}^{*} = p_{\beta}^{*}(t_{k}^{-})$ and $\breve{p}_{\beta}^{*} = p_{\alpha}^{*}(t_{k}^{-})$.
From the above equations, one can see that the errors depend on the leader's position $p_{0}(t)$ and velocities $v_{0}(t)$ used in the virtual controllers.
Besides, the leader's acceleration $u_{0}(t)$ is also used in the local actual controller of each follower. 
Thus, the implicit assumption in \cite{Choi2021b} is that all followers have access to the leader but it is not valid in a general distributed network (see Fig. \ref{fig:compare-ctrl-graph}(b)). Compared with \cite{Choi2021b}, the pros and cons of the proposed assignment algorithm are summarized as follows.
	\\ \indent
	\emph{Pros:} (i) The proposed assignment algorithm can be implemented for multi-agent systems with limited leader information under distributed network; and (ii)  employing the estimated leader information $\hat{p}_{i,0}(t)$ and $\hat{v}_{i,0}(t)$ 
	from the distributed estimator \eqref{eq:estimator}, a local controller and assignment condition are derived and the enhancement of the distributed control performance is rigorously analyzed in the sense of Lyapunov (see the proof of Theorem \ref{th:final}).
	\\ \indent
	\emph{Cons:} (i) The computation for the estimator \eqref{eq:estimator} and the transmission of the estimated signals (i.e., $\hat{p}_{i,0}$, $\hat{v}_{i,0}$, $\hat{u}_{i,0}$) among agents are necessary; and (ii) to implement the proposed distributed assignment algorithm, a pair of agents needs to known each other's neighbors (i.e., Assumption \ref{as:goal-communication} should be satisfied). 
}
\end{remark}

 \begin{remark}
 	\emph{
	In the recent goal assignment study \cite{Choi2021b}, the structure of the error surfaces are similar to ours and 
	the analysis showing that $V(t)$ decreases whenever the goals are swapped was given similarly (see the proof of Theorem 2 in \cite{Choi2021b}). 
	In spite of these similarities, the goal assignment strategy in \cite{Choi2021b} cannot be extended to solve our problem 
	because the only update of the goals in the assignment procedure in \cite{Choi2021b} may lose the connectivity because of the  limited communication range. 
	In order to preserve the connectivity while exchanging goals, we update not only the goals but also the neighbors in the proposed assignment algorithm (see $\bar{\mathcal{N}}_{\alpha}(t)$, $\bar{\mathcal{N}}_{\beta}(t)$, $\bar{\mathcal{N}}_{m}(t)$ in Algorithm \ref{alg:goal}). 
	This is the main different part from \cite{Choi2021b} in terms of goal assignment.}
	\label{re:GA-neighbor-update}
\end{remark}

\begin{figure}[h]
	\centering
	\includegraphics[width=5cm,height=2cm]{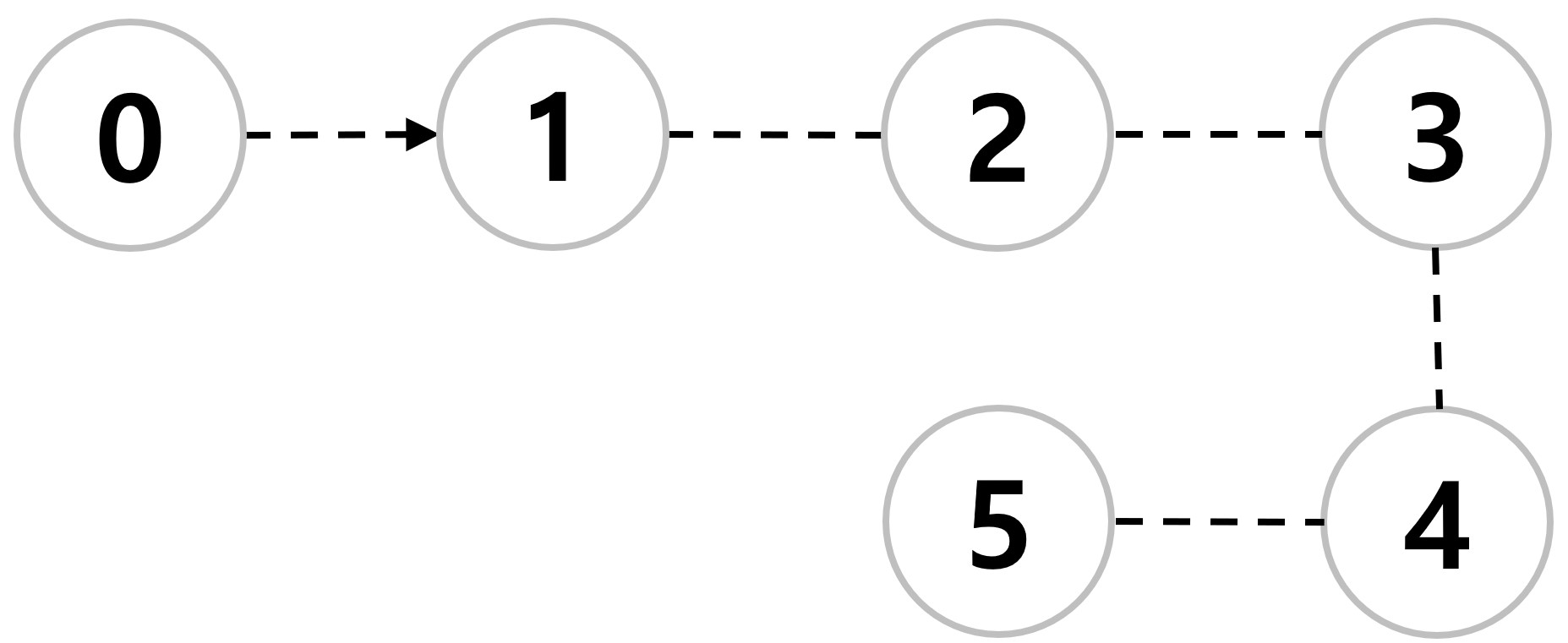}
	\caption{Initial control graph $\bar{\mathcal{G}}(0)$ for Example 1}
	\label{fig:initial-graph-01}
\end{figure}

\begin{figure}
	\centering
	\subfigure[]{\epsfig{figure=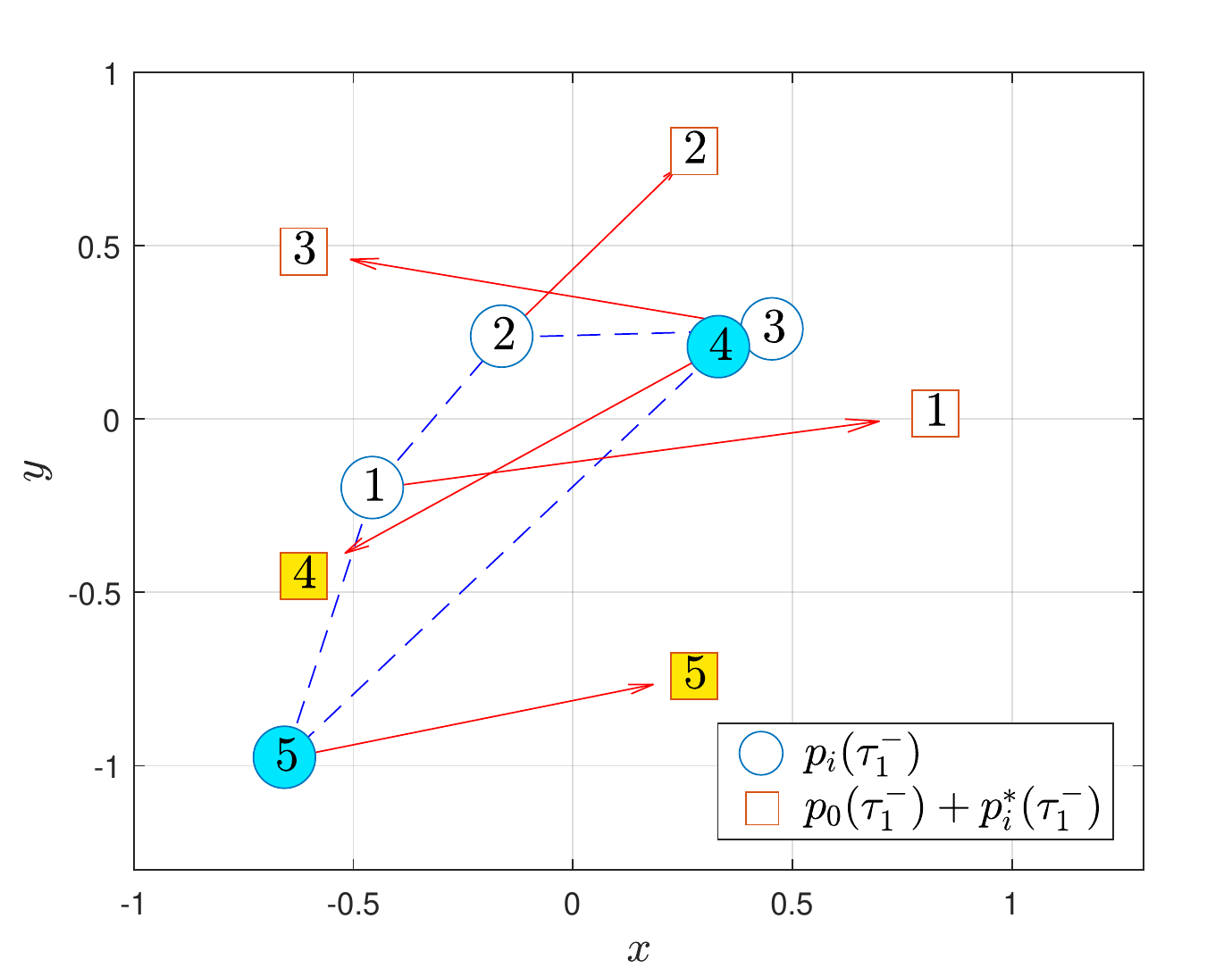, width=5.8cm,height=4.6cm}}
	\subfigure[]{\epsfig{figure=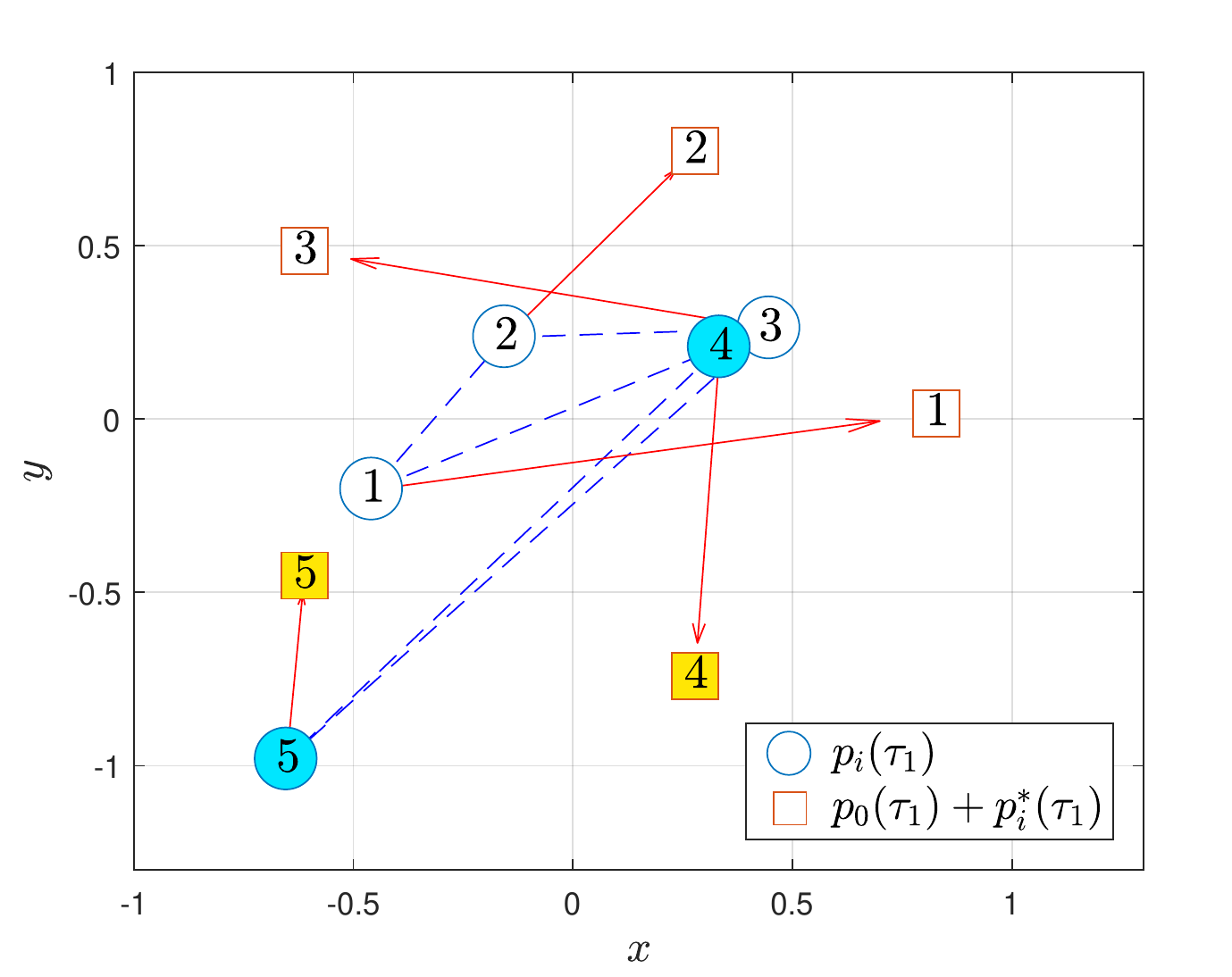, width=5.8cm,height=4.6cm}}
	\subfigure[]{\epsfig{figure=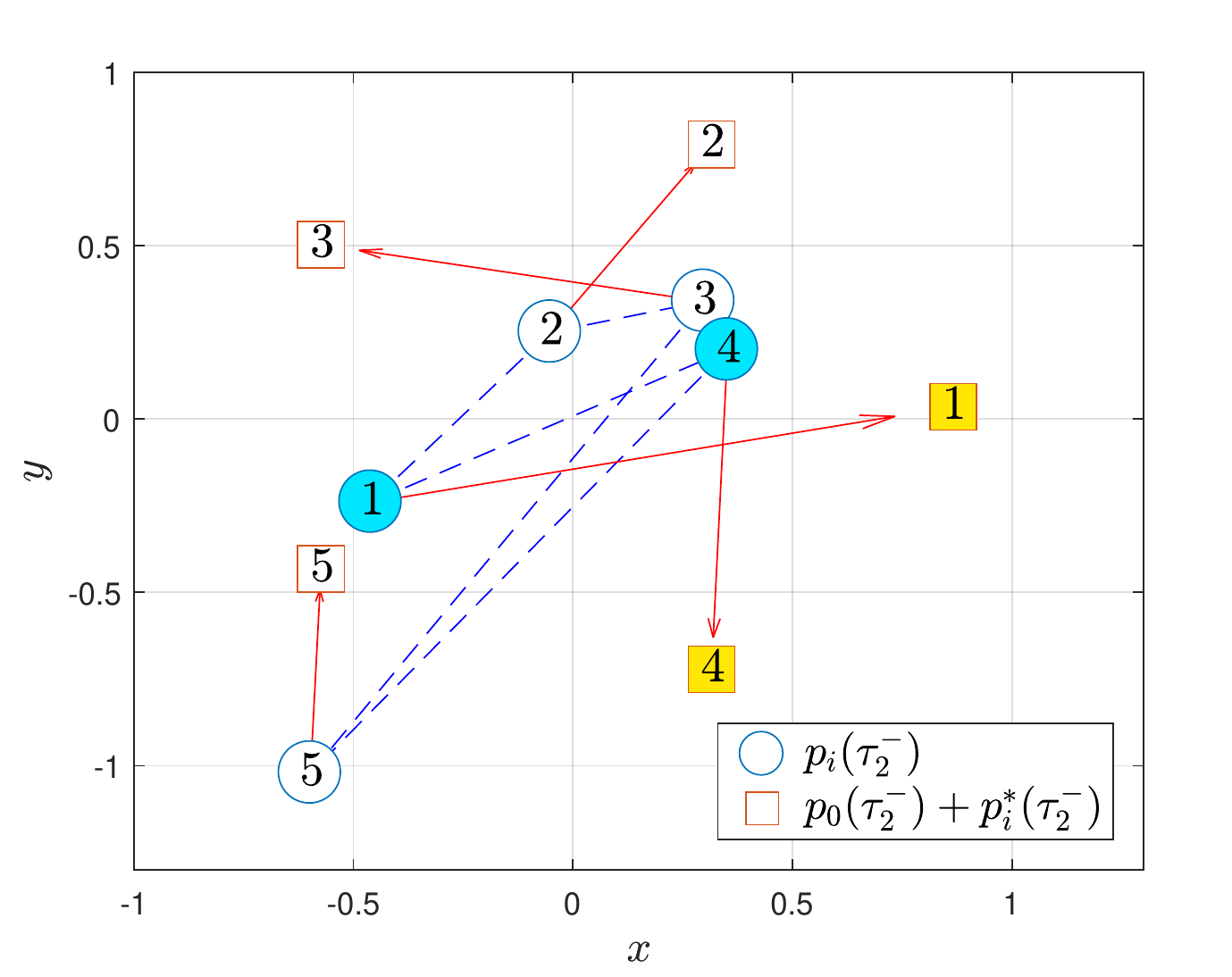, width=5.8cm,height=4.6cm}}
	\subfigure[]{\epsfig{figure=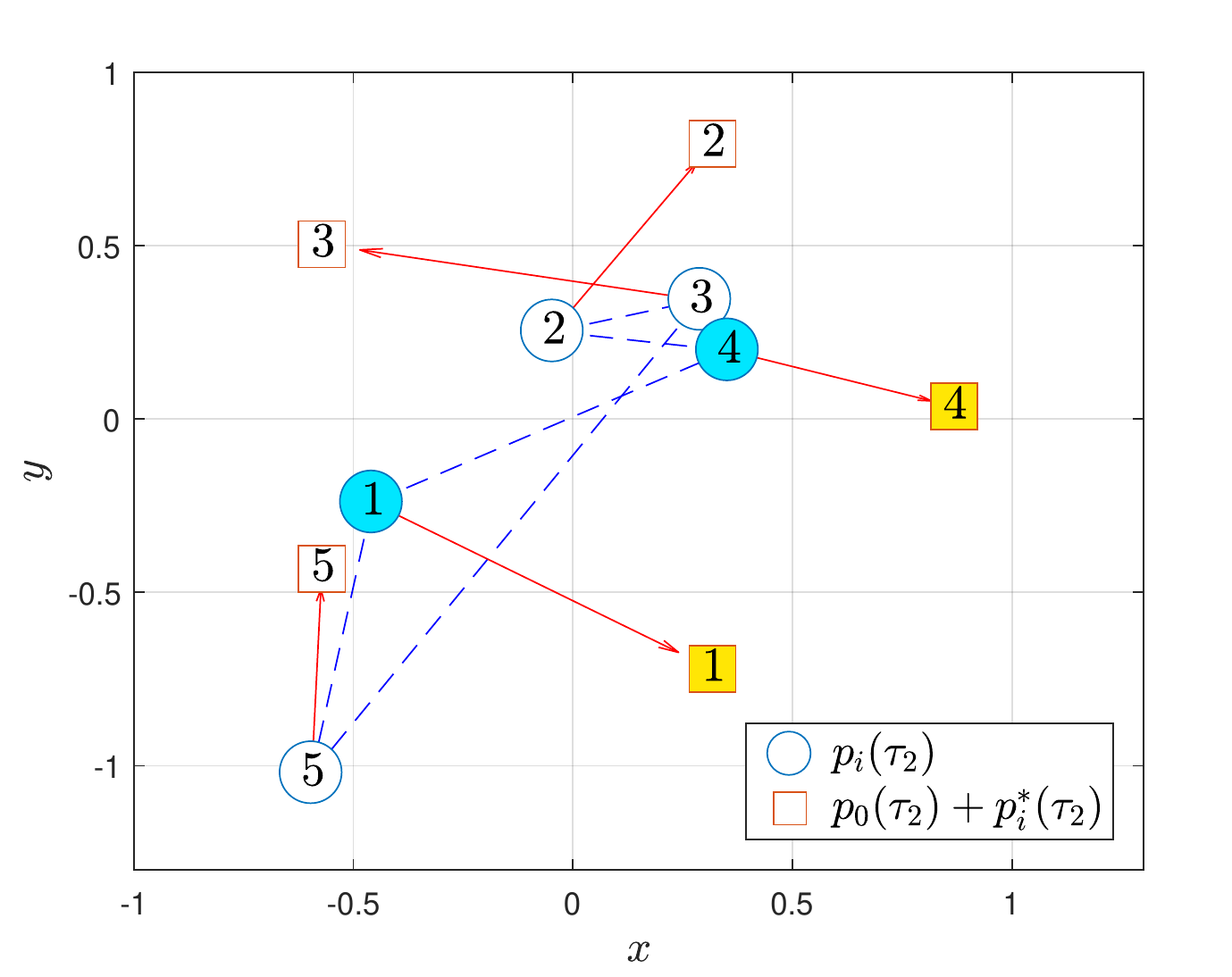, width=5.8cm,height=4.6cm}}
	\caption{Followers' positions and assigned goals for Example 1 at the exchanging instants 
		(a) $t = \tau_{1}^{-}$
		(b) $t = \tau_{1}$
		(c) $t = \tau_{2}^{-}$
		(d) $t = \tau_{2}$.}
	\label{fig:GA-inst-01}
\end{figure}

\begin{figure}
	\centering
	\subfigure[]{\epsfig{figure=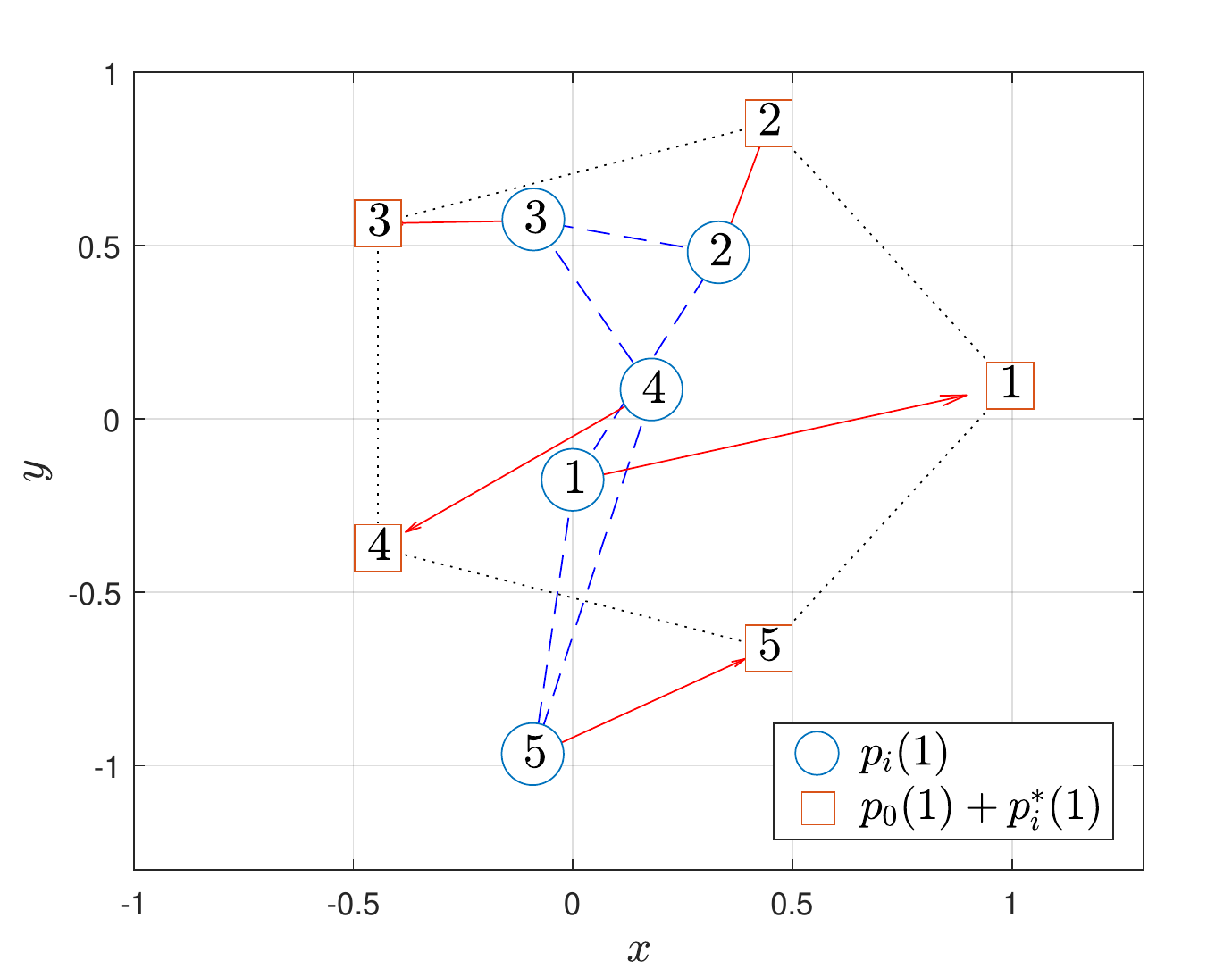, width=5.8cm,height=4.6cm}}
	\subfigure[]{\epsfig{figure=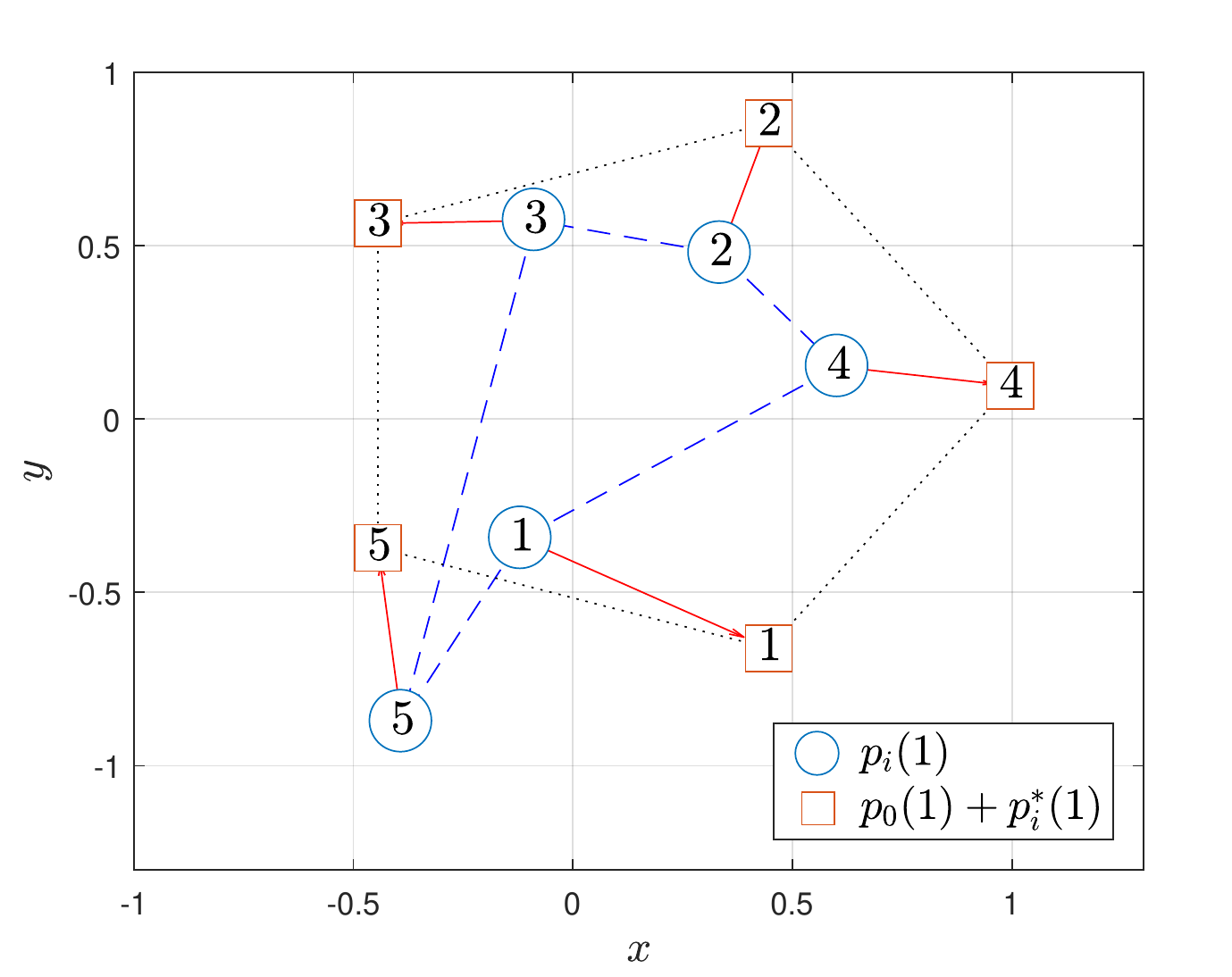, width=5.8cm,height=4.6cm}}
	\caption{Followers' positions and assigned goals at $t = 1$s for Example 1   
		(a) without goal assignment
		(b) with goal assignment.}
	\label{fig:form-01}
\end{figure}

\begin{figure}
	\centering
	\includegraphics[width=5.8cm,height=4.6cm]{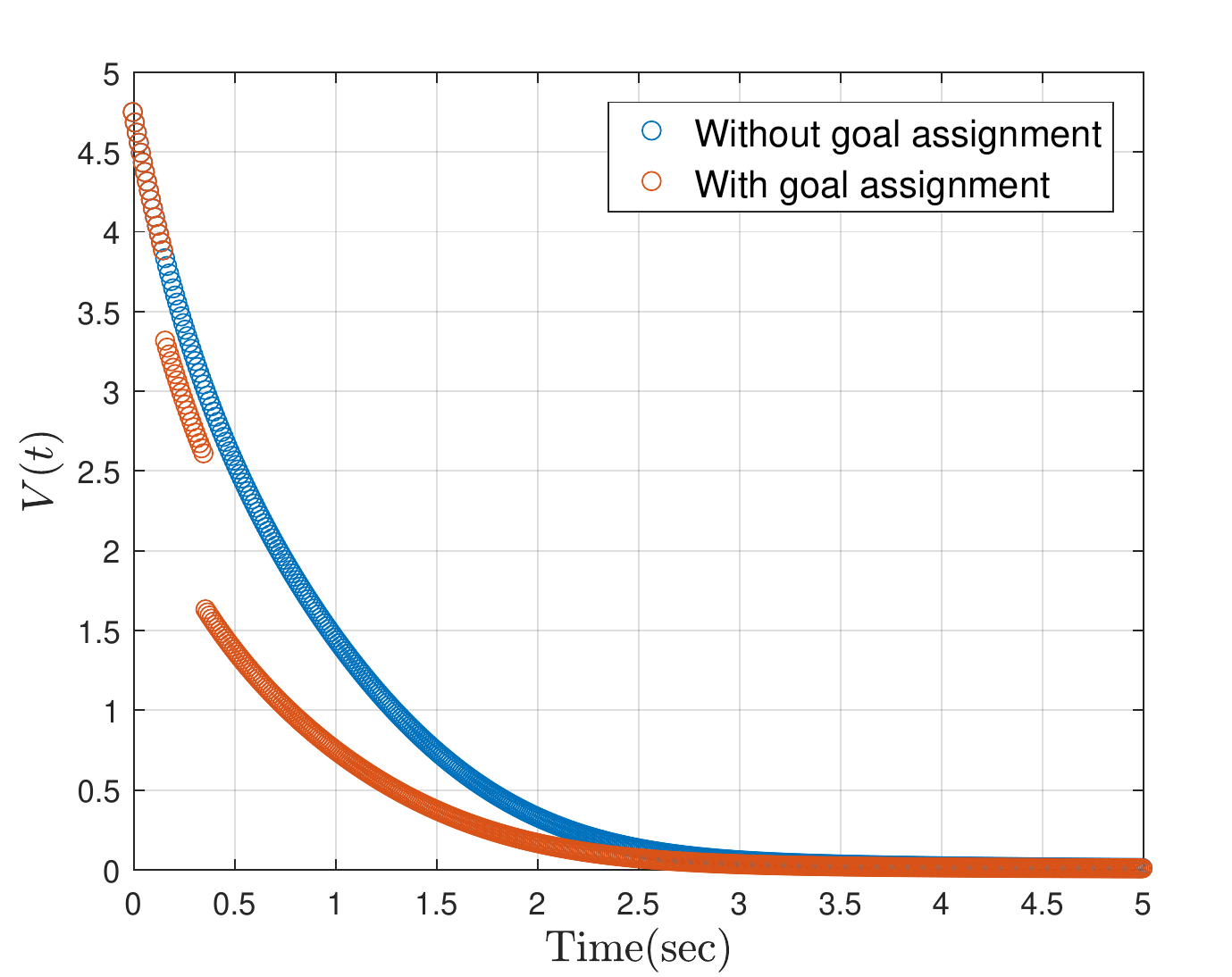}
	\caption{Comparison of the Lyapunov function $V(t)$ for Example 1.}
	\label{fig:lyap-01}
\end{figure}

\section{Simulation results}
\label{sec:simulation}
Two examples are simulated to validate the proposed distributed goal assignment strategy where a numerical example and a multi-quadrotor system are considered in the first and second examples, respectively. 
In this simulation, the formation control results with and without goal assignment will be compared.

\emph{Example 1:} In this example, one leader and five followers in $\mathbb{R}^{2}$ are considered. 
The initial control graph for the leader and followers is given in Fig. \ref{fig:initial-graph-01}. 
The target shape is chosen as a pentagon and the initially assigned goals are $p_{i}^{*}(0) = G_{i}$ for $i=1,\dots,5$
with
\begin{align}
	\nonumber
	\begin{array}{l}
		G_{1}=[0.796,0]^{\top}, ~~~~~~~~~~~~  G_{2}=[0.246, 0.757]^{\top}, ~~~~~
		G_{3}=[-0.644, 0.468]^{\top}, 
		\\	G_{4}=[-0.644, -0.468]^{\top}, ~~ G_{5}=[0.246, -0.757]^{\top}.
	\end{array}
\end{align}
The initial positions and velocities of the followers are randomly chosen. Leader is given by $p_{0}(t) = [0.2t, 0.2 \sin(t/2)]^{\top}$ and the followers are set to communicate with nearby followers within the communication range $R = 2$. 
The controller and estimator gains are selected as $k_{i,1} = 0.5$, $k_{i,2} = 1$, $\gamma_{i,1} = \gamma_{i,2} = 100$, and $\gamma_{i,3} = 20$ where $i=1,\dots,5$.

During the simulation, the goal exchange occurs two times with $\tau_{1} = 0.15$s and $\tau_{2} = 0.35$s and its progress is described in Fig. \ref{fig:GA-inst-01}. In this figure, the blue circles indicate the current positions of the five followers (i.e, $p_{i}(t)$) and the orange rectangles are the sum of the leader positions and goal positions (i.e., $p_{0}(t) + p_{i}^{*}(t)$). 
The red solid arrows in these figures are drawn to show which goal is assigned to each follower and the blue dashed lines represent the edges of the control graph $\bar{\mathcal{G}}_{F}(t)$. Additionally, colored circles and rectangles are for the positions of the followers and their goals to be exchanged, respectively. From Figs. \ref{fig:GA-inst-01}(a) and \ref{fig:GA-inst-01}(a), it is captured that the goals of followers $4$ and $5$ are exchanged at $t = \tau_{1}$. 
It should be emphasized that not only the assigned goals but also the control graph is changed. 
Similarly at $t = \tau_{2}$, the goal exchange between followers $1$ and $4$ happens in Figs. \ref{fig:GA-inst-01}(c) and \ref{fig:GA-inst-01}(d).

The comparison of the formation control results at $t = 1$s is given in Fig. \ref{fig:form-01}. 
As the goals are swapped, the goal positions and the edges regarding to $\bar{\mathcal{G}}_{F}(t)$ in Fig. \ref{fig:form-01}(b) are different from Fig. \ref{fig:form-01}(a).
In these figures, the black dotted lines are drawn to visualize the target shape. 
Note that the formation formed by the followers is relatively closer to the pentagon with goal exchange in Fig. \ref{fig:form-01}(b) than that without goal assignment in Fig. \ref{fig:form-01}(a). 
This reveals that the transient formation control performance is enhanced by the proposed assignment strategy.  
We can see also the performance improvement from Fig. \ref{fig:lyap-01} 
where the blue line is the Lyapunov function $V(t)$ without goal assignment and the orange one is $V(t)$ with goal assignment.
As we discussed in the proof of Theorem \ref{th:final}, there exist discrete jumps of $V(t)$ holding $V(\tau_{g}^{-}) > V(\tau_{g})$ with $g=1,2$ and $V(t)$ with goal assignment is always lower than that without goal assignment. 
From these figures, it appears that the proposed goal assignment strategy provides a performance enhancement of the leader-following formation control.

\emph{Example 2:} As the second example, 3D formation problem of multiple quadrotors is addressed.
A simplified model of the $i$th quadrotor can be modeled as \cite{Labbadi2019}
\begin{align}
	\begin{array}{l}
		\ddot{x}_{i} = \frac{1}{m}(\cos\phi_{i} \sin\theta_{i} \cos \psi_{i} + \sin\phi_{i}\sin\psi_{i})U_{z,i},\\
		\ddot{y}_{i} = \frac{1}{m}(\cos\phi_{i} \sin\theta_{i} \sin \psi_{i} - \sin\phi_{i}\cos\psi_{i})U_{z,i},\\
		\ddot{z}_{i} = -g + \frac{1}{m}\cos\phi_{i} \cos\theta_{i}U_{z,i},\\
		\ddot{\phi}_{i} = a_{1}\dot{\theta}_{i}\dot{\psi}_{i} + b_{1}U_{\phi,i},\\
		\ddot{\theta}_{i} = a_{2}\dot{\phi}_{i}\dot{\psi}_{i} + b_{2}U_{\theta,i},\\
		\ddot{\psi}_{i} = a_{3}\dot{\phi}_{i}\dot{\theta}_{i} + b_{3}U_{\psi,i},\\										
	\end{array}
\label{eq:quadrotor}
\end{align}
where $a_{1} = (I_{yy}-I_{zz})/I_{xx}$, $a_{2} = (I_{zz}-I_{xx})/I_{yy}$, $a_{3} = (I_{xx}-I_{yy})/I_{zz}$, $b_{1} = d/I_{xx}$, $b_{2}=d/I_{yy}$, and $b_{3} = 1/I_{zz}$. The translational and rotational motions of the quadrotors in three directions are assured by $(x_{i},y_{i},z_{i})$ and $(\theta_{i},\phi_{i},\psi_{i})$, respectively.
Here, $m$ is the total mass of the system, $g$ indicates the gravitational acceleration, $I_{xx}$, $I_{yy}$, and $I_{zz}$ denote the rotary inertia, $U_{z,i}$ is the total thrust, $U_{\phi,i}$, $U_{\theta,i}$, and $U_{\psi,i}$ are the moment. The system parameters are given in Table \ref{tab:quadrotor-param}.

\begin{table}[!hbt]
	\setlength{\tabcolsep}{0.5pc}
	%	\captionsetup{singlelinecheck = false}
	\newlength{\digitwidth} \settowidth{\digitwidth}{\rm 0}
	%	\catcode`?=\active \def?{\kern\digitwidth}
	\caption{Quadrotor parameters}
	\label{tab:quadrotor-param}
	\begin{tabularx}{\textwidth}{@{}ccc}
		\toprule
		Parameter & Unit & Value \\
		%		& \multicolumn{2}{c}{Pilot plant}
		%		& \multicolumn{2}{c}{Full scale plant} \\
		%		\cmidrule(lr){2-3} \cmidrule(lr){4-5}
		%		& \multicolumn{1}{r}{Influent}
		%		& \multicolumn{1}{r}{Effluent}
		%		& \multicolumn{1}{r}{Influent}
		%		& \multicolumn{1}{r}{Effluent} \\
		\hline
		$g$ & m/s$^{2}$ & $9.81$ \\
		$m$ & kg & $0.486$ \\
		$I_{xx}$ & kg$\cdot$m$^{2}$ & $3.827$e-3  \\
		$I_{yy}$ & kg$\cdot$m$^{2}$ & $3.827$e-3  \\
		$I_{zz}$ & kg$\cdot$m$^{2}$ & $7.6566$e-3  \\
		$d$ & m & 0.1  \\
		\bottomrule
		%		\multicolumn{5}{@{}p{120mm}}{\footnotesize Reprinted from: G.M. Ritcey,
			%			Tailings Management,
			%			Elsevier, Amsterdam, 1989, p. 635.}
	\end{tabularx}
\end{table}

By defining virtual translational inputs $u_{x,i}$, $u_{y,i}$, $u_{z,i}$ as $u_{x,i} = (1/m)(\cos\phi_{i} \sin\theta_{i} $ $\cos \psi_{i} + \sin\phi_{i}\sin\psi_{i})U_{z,i}$, $u_{y,i} = (1/m)(\cos\phi_{i} \sin\theta_{i} \sin \psi_{i} - \sin\phi_{i}\cos\psi_{i})U_{z,i}$, $u_{z,i} = -g + (1/m)$ $\cos\phi_{i} \cos\theta_{i}U_{z,i}$, respectively, the tranlational subsystem becomes a double integrator system as follows: 
\begin{align}
	\begin{array}{l}
		\ddot{x}_{i} = u_{x,i},\\
		\ddot{y}_{i} = u_{y,i},\\
		\ddot{z}_{i} = u_{z,i}.
	\end{array}
	\label{eq:quadrotor-trans}
\end{align}
For \eqref{eq:quadrotor-trans}, the distributed estimator and backstepping controller in Section \ref{sec:estimator-controller} are designed 
and Algorithm \ref{alg:goal} is applied. However, since the actual control input of the quadrotor is $U_{i} = [U_{\phi,i}, U_{\theta,i}, U_{\psi,i}, U_{z,i}]^{\top}$, 
a sliding mode control algorithm is adopted for controlling the attitude subsystem as follows:
\begin{align} 
	\nonumber 
	\begin{array}{l}
	U_{\phi,i} = \frac{1}{b_{1}} (-a_{1}\dot{\theta}_{i}\dot{\psi}_{i} - k_{\phi,i}\mathrm{sign}(s_{\phi,i}) - \lambda_{\phi,i}(\dot{\phi}_{i} - \dot{\phi}_{i,d}) + \ddot{\phi}_{i,d}), \\
U_{\theta,i} = \frac{1}{b_{2}} (-a_{2}\dot{\phi}_{i}\dot{\psi}_{i} - k_{\theta,i}\mathrm{sign}(s_{\theta,i})- \lambda_{\theta,i}(\dot{\theta}_{i} - \dot{\theta}_{i,d}) + \ddot{\theta}_{i,d}),\\ 
U_{\psi,i} = \frac{1}{b_{3}} (-a_{3}\dot{\phi}_{i}\dot{\theta}_{i} - k_{\psi,i}\mathrm{sign}(s_{\psi,i})- \lambda_{\psi,i}(\dot{\psi}_{i} - \dot{\psi}_{i,d}) + \ddot{\psi}_{i,d}), 
	\end{array}
\end{align}
where $k_{\phi,i}$, $k_{\theta,i}$, $k_{\psi,i}$, $\lambda_{\phi,i}$, $\lambda_{\theta,i}$, $\lambda_{\psi,i}$ are control design parameters, $s_{\phi,i}$, $s_{\theta,i}$, and $s_{\psi,i}$ are sliding error surfaces, and $\phi_{i,d}$ and $\theta_{i,d}$ are the desired pith and roll angles, respectively, which are computed from the virtual translational inputs as follows:
\begin{align}
	\begin{array}{l}
		%U_{z,i} = m \sqrt{ u_{x,i}^{2} + u_{y,i}^{2} + (u_{z,i} + g)^{2} },\\
		\phi_{i,d} = \arctan \Big(\cos \theta_{d} (\frac{\sin \psi_{i,d} u_{x,i} - \cos \psi_{i,d} u_{y,i}}{u_{z,i}+g}) \Big),\\
		\theta_{i,d} = \arctan \Big(\frac{\cos \psi_{i,d} u_{x,i} + \sin \psi_{i,d} u_{y,i}}{u_{z,i}+g} \Big).
	\end{array}
\end{align}
The desired yaw angle $\psi_{i,d}$ can be chosen freely and thus it is set to zero in our simulation. Similarly, the last control input $U_{z,i}$ is derived as
\begin{align}
	U_{z,i} =  m \sqrt{ u_{x,i}^{2} + u_{y,i}^{2} + (u_{z,i} + g)^{2} }.
\end{align}
For more information on the quadrotor model and the derivation of $\phi_{i,d}$, $\theta_{i,d}$, and $U_{z,i}$, please see \cite{Labbadi2019}. In addition, the sliding mode controllers $U_{\phi,i}$, $U_{\psi,i}$, and $U_{\psi,i}$ for the attitude subsystem are motivated by \cite{Slotine1991}.

\begin{figure}
	\centering
	\includegraphics[width=6cm,height=3.5cm]{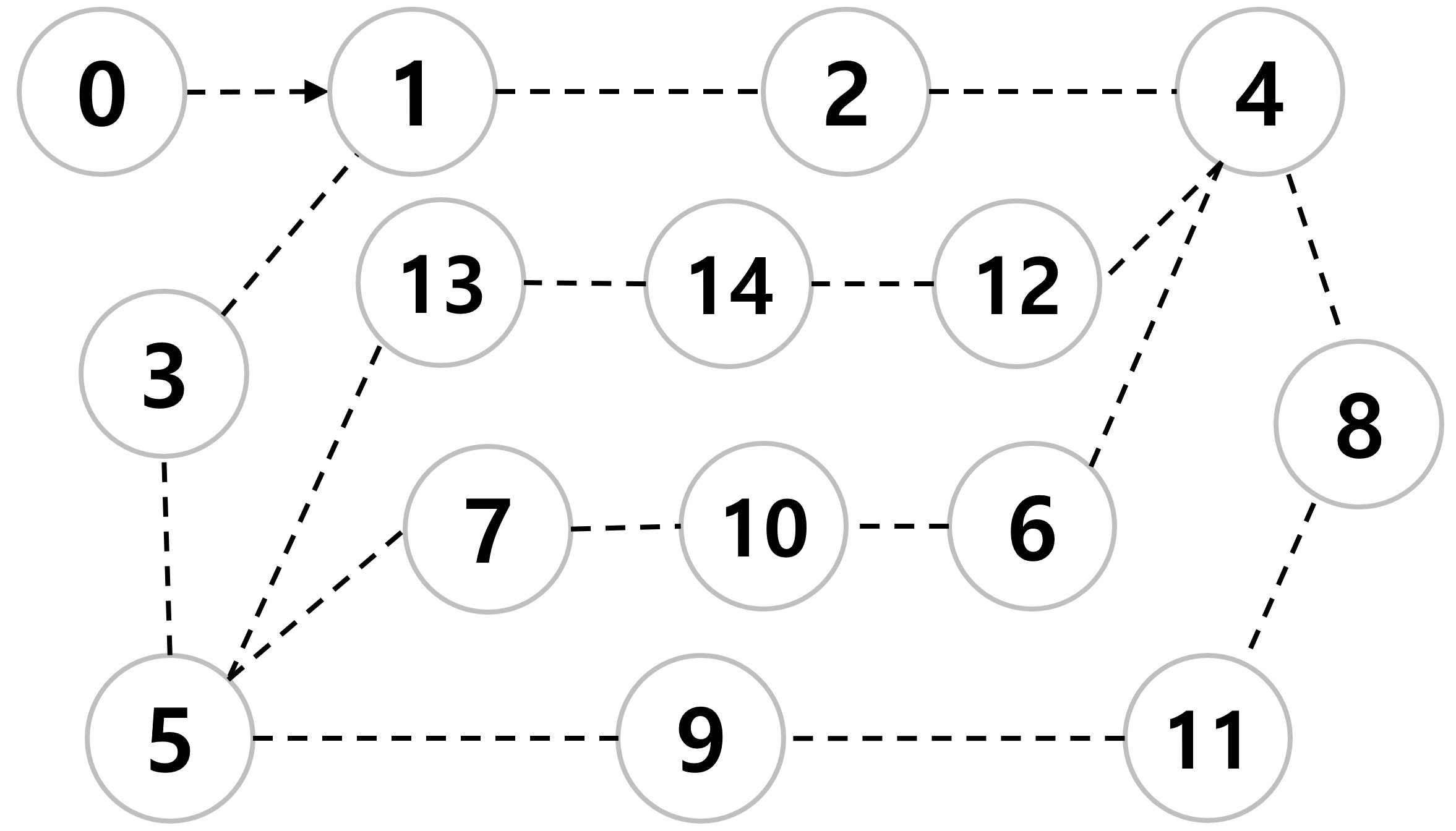}
	\caption{Initial control graph $\bar{\mathcal{G}}(0)$ for Example 2}
	\label{fig:initial-graph-02}
\end{figure}

In this example, a total of $14$ quadrotors are considered as followers and a leader is given by $p_{0} = [10\sin(t/2)~ 10\cos(t/2)~ t+30]^{\top}$.
The system parameters are given in Table \ref{tab:quadrotor-param}.
Goal positions $G_{i}$ are set to form a sphere with respect to the leader and are initially assigned to the quadrotors as $p_{i}^{*}(0) = G_{i}$, $i=1,\dots,14$. We assume that the quadrotors start from the ground and fly to their goal positions with respect to the leader. The goal positions and the initial conditions of the quadrotors are omitted due to the limitation of paper length.
For the distributed leader estimators, backstepping controllers, and the sliding mode attitude controllers, the design parameters are chosen as $R = 20$, $\gamma_{i,1} = \gamma_{i,2} = 100$, $\gamma_{i,3} = 20$, $k_{i,1} = 0.01$, $k_{i,2} = 5$, $\lambda_{\phi,i} = \lambda_{\theta,i} = \lambda_{\psi,i} = 100$, 
$k_{\phi,i} = k_{\theta,i} = k_{\psi,i} = 5$ where $i=1,\dots,14$. 

The initial control graph for the considered multiple quadrotors is given in Fig. \ref{fig:initial-graph-02}.
According to the proposed assignment strategy, there are eight times of goal exchanging with $\tau_{1} = 0.4$s, $\tau_{2}= 0.5$s, $\tau_{3}= 0.8$s, $\tau_{4}= 0.9$s, $\tau_{5}= 1.2$s, $\tau_{6}= 1.3$s, $\tau_{7}= 1.6$s, and $\tau_{8}= 1.9$s. 
Fig. \ref{fig:GA-inst-02} shows the followers' positions and their goals at $t = \tau_{2}$ and $t = \tau_{5}$. 
As illustrated in this figure, two goals of a pair of the followers are swapped which leads to the update of the control graph $\bar{\mathcal{G}}(t)$. 
Fig. \ref{fig:form-02} captures the quadrotors' positions at $t = 5$s with and without goal assignment in xy and xz coordinates.
Despite the same initial conditions and design parameters, the quadrotors with the proposed assignment strategy are closer to their goals than without goal assignment. 
Fig. \ref{fig:lyap-02} illustrates the comparison of the Lyapunov functions $V(t)$ where $V(t)$ 
decreases as the goals are swapped if the proposed assignment algorithm is employed.  
These figures confirm that the proposed assignment strategy is applicable to the formation control of practical multiple quadrotors.

\begin{figure}
	\centering
	\subfigure[]{\epsfig{figure=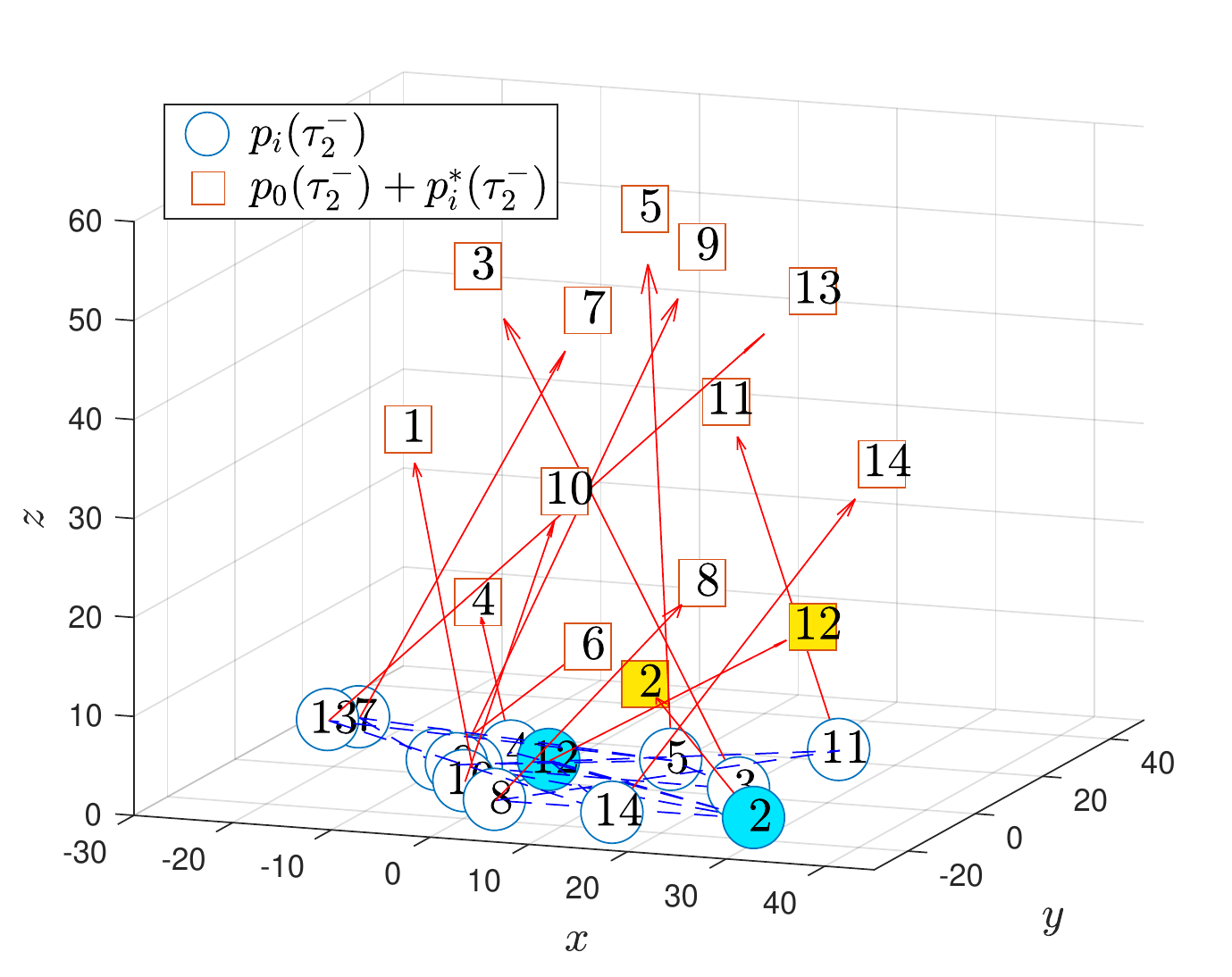, width=5.8cm,height=4.6cm}}
	\subfigure[]{\epsfig{figure=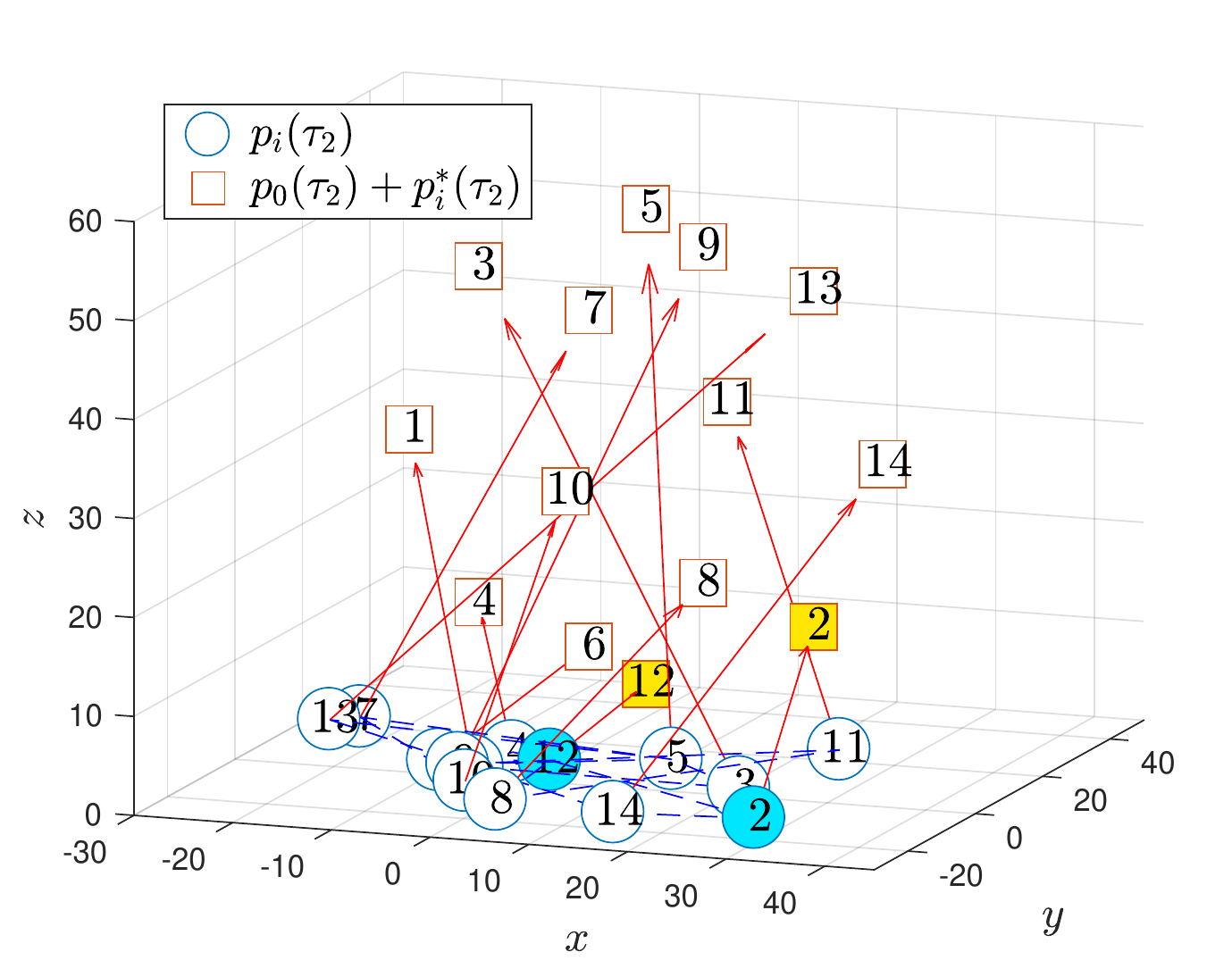, width=5.8cm,height=4.6cm}}
	\subfigure[]{\epsfig{figure=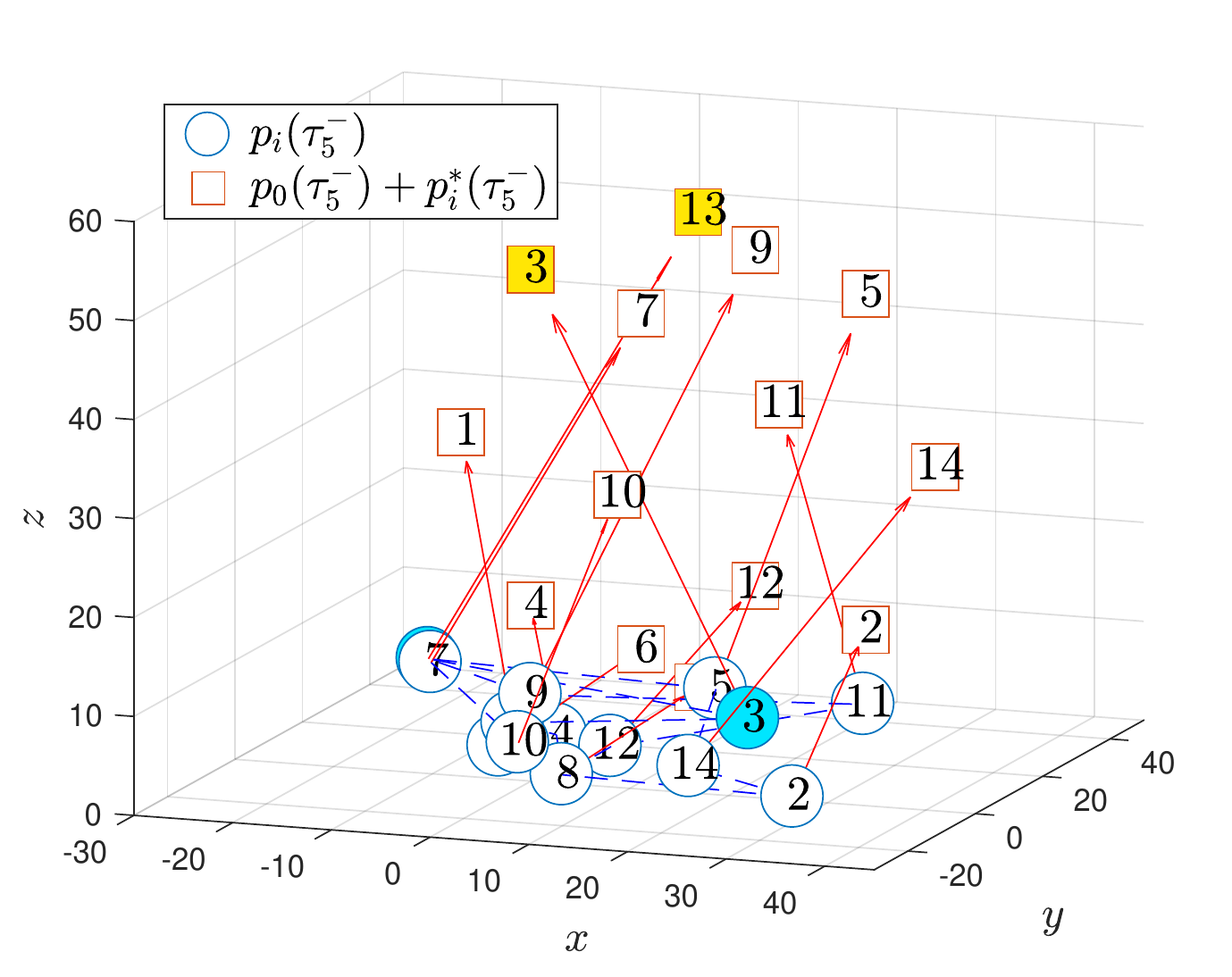, width=5.8cm,height=4.6cm}}
	\subfigure[]{\epsfig{figure=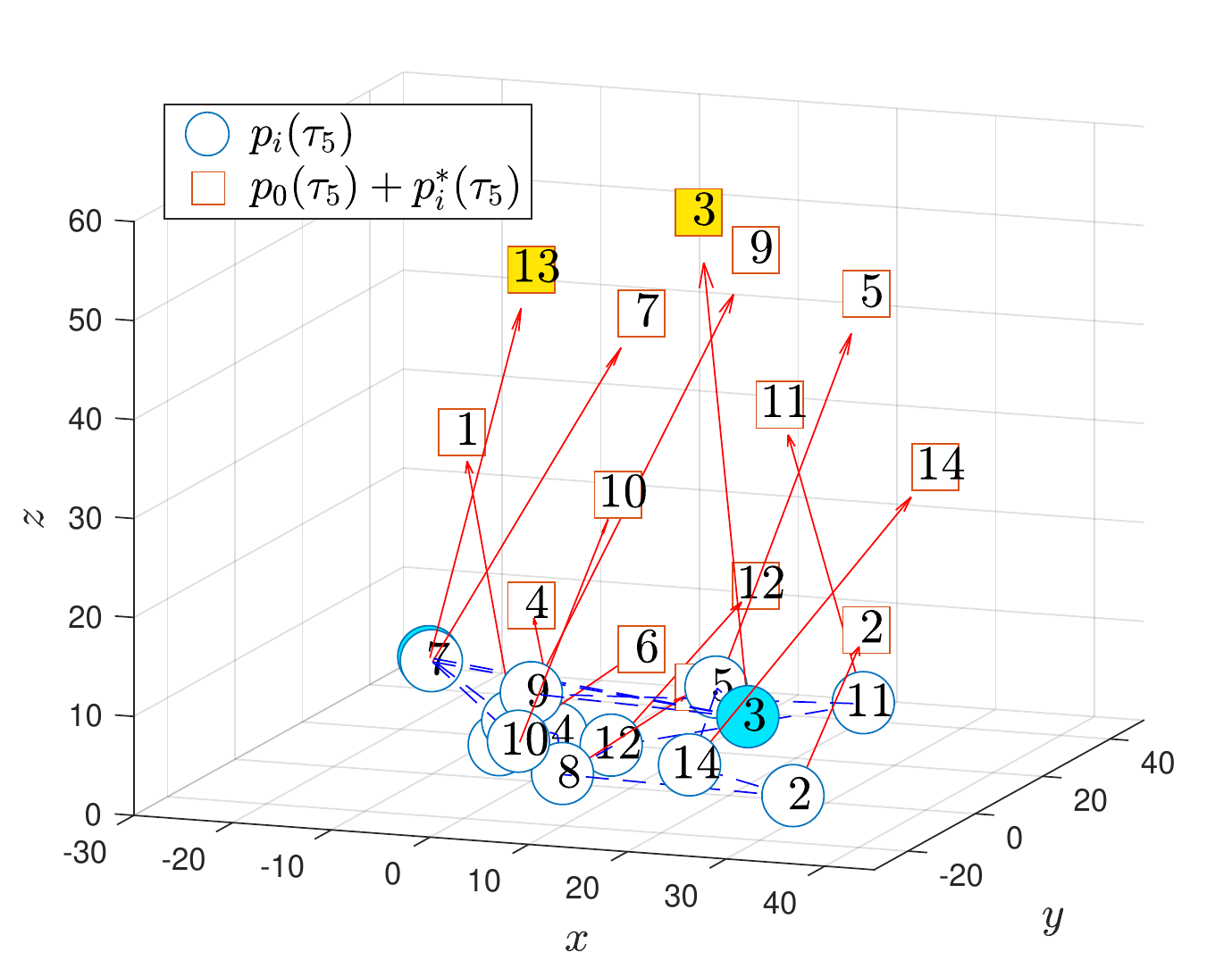, width=5.8cm,height=4.6cm}}
	\caption{Followers' positions and assigned goals for Example 2 at the exchanging instants 
		(a) $t = \tau_{2}^{-}$
		(b) $t = \tau_{2}$
		(c) $t = \tau_{5}^{-}$
		(d) $t = \tau_{5}$.}
	\label{fig:GA-inst-02}
\end{figure}

\begin{figure}
	\centering
	\subfigure[]{\epsfig{figure=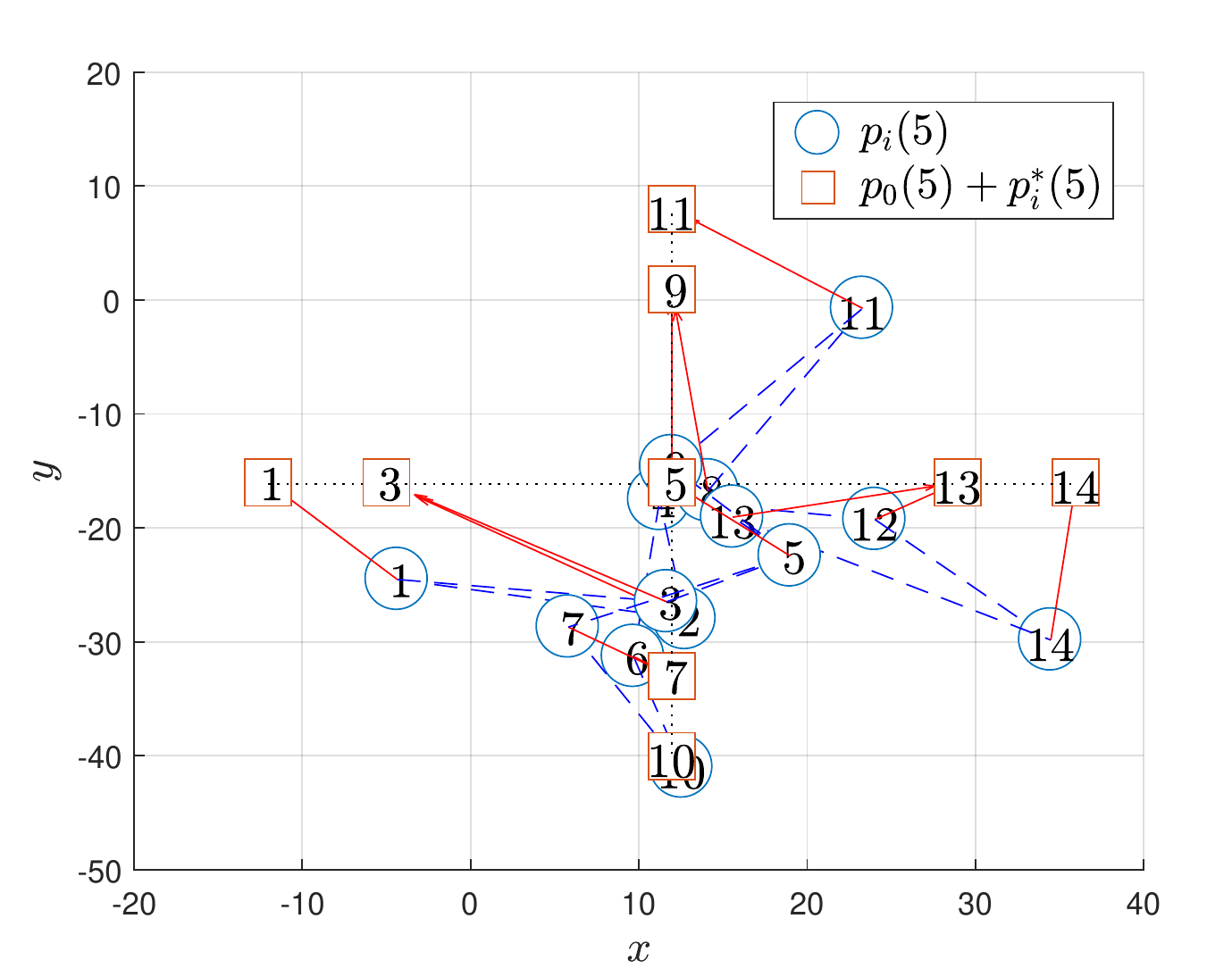, width=5.8cm,height=4.6cm}}
	\subfigure[]{\epsfig{figure=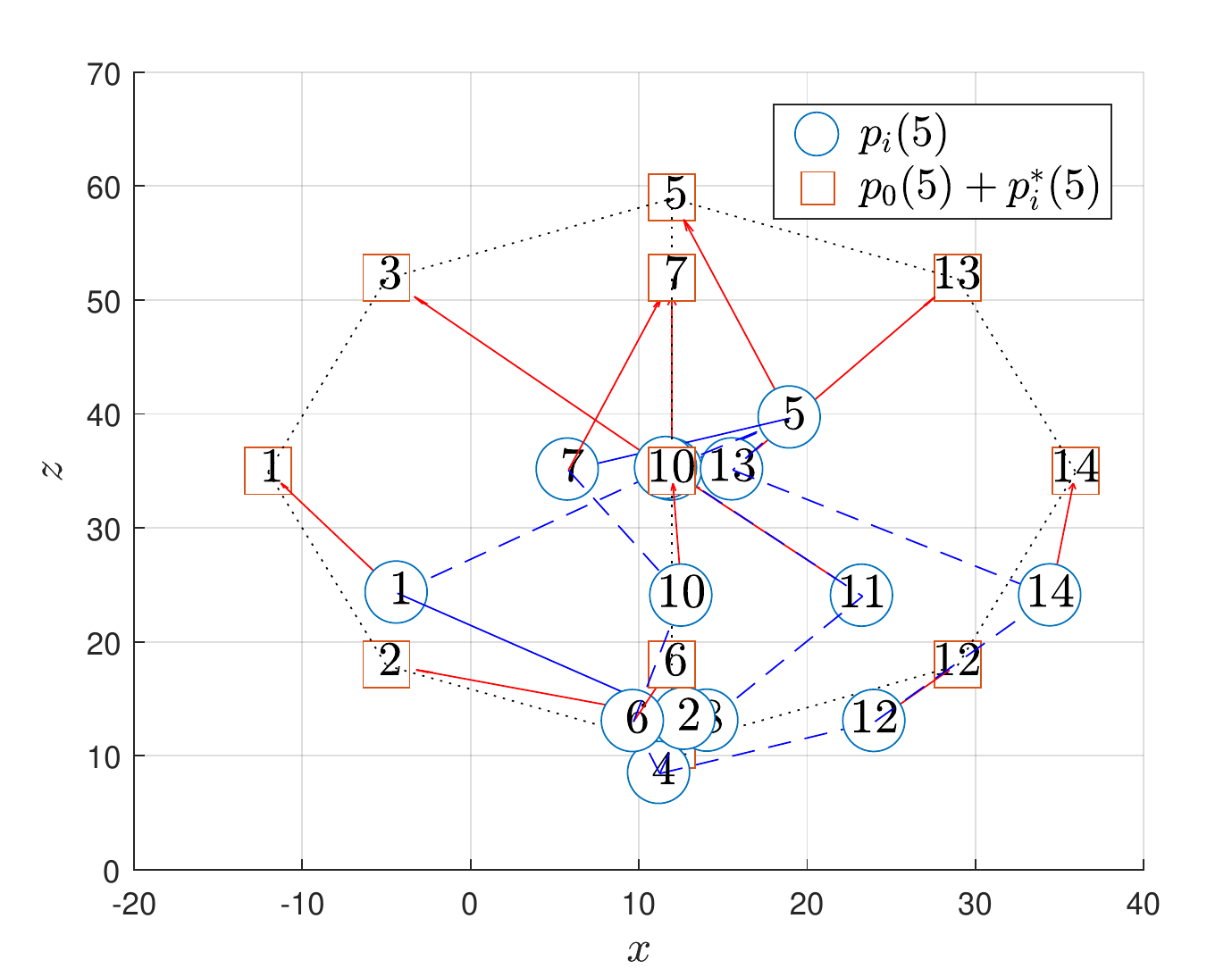, width=5.8cm,height=4.6cm}}
	\subfigure[]{\epsfig{figure=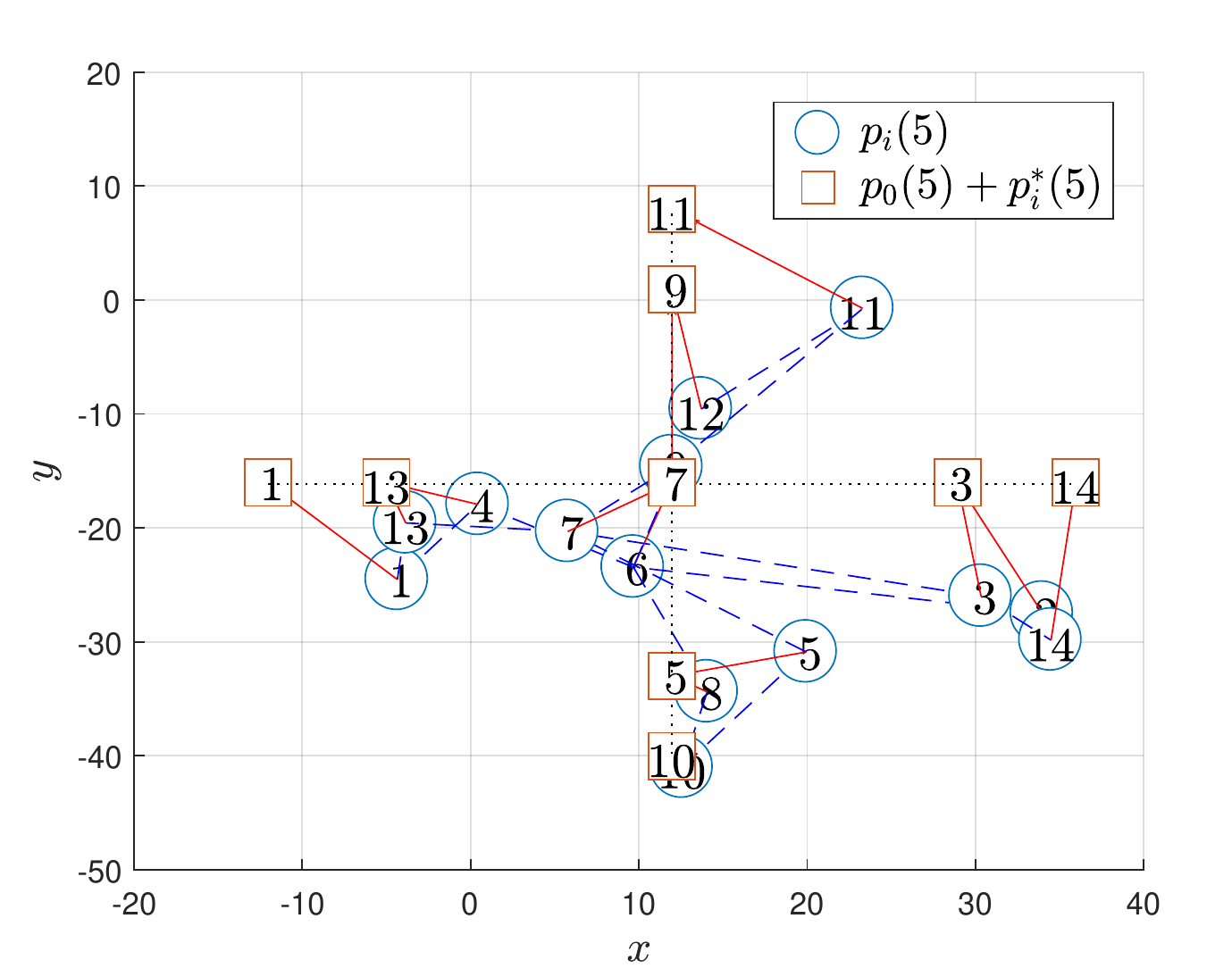, width=5.8cm,height=4.6cm}}	
	\subfigure[]{\epsfig{figure=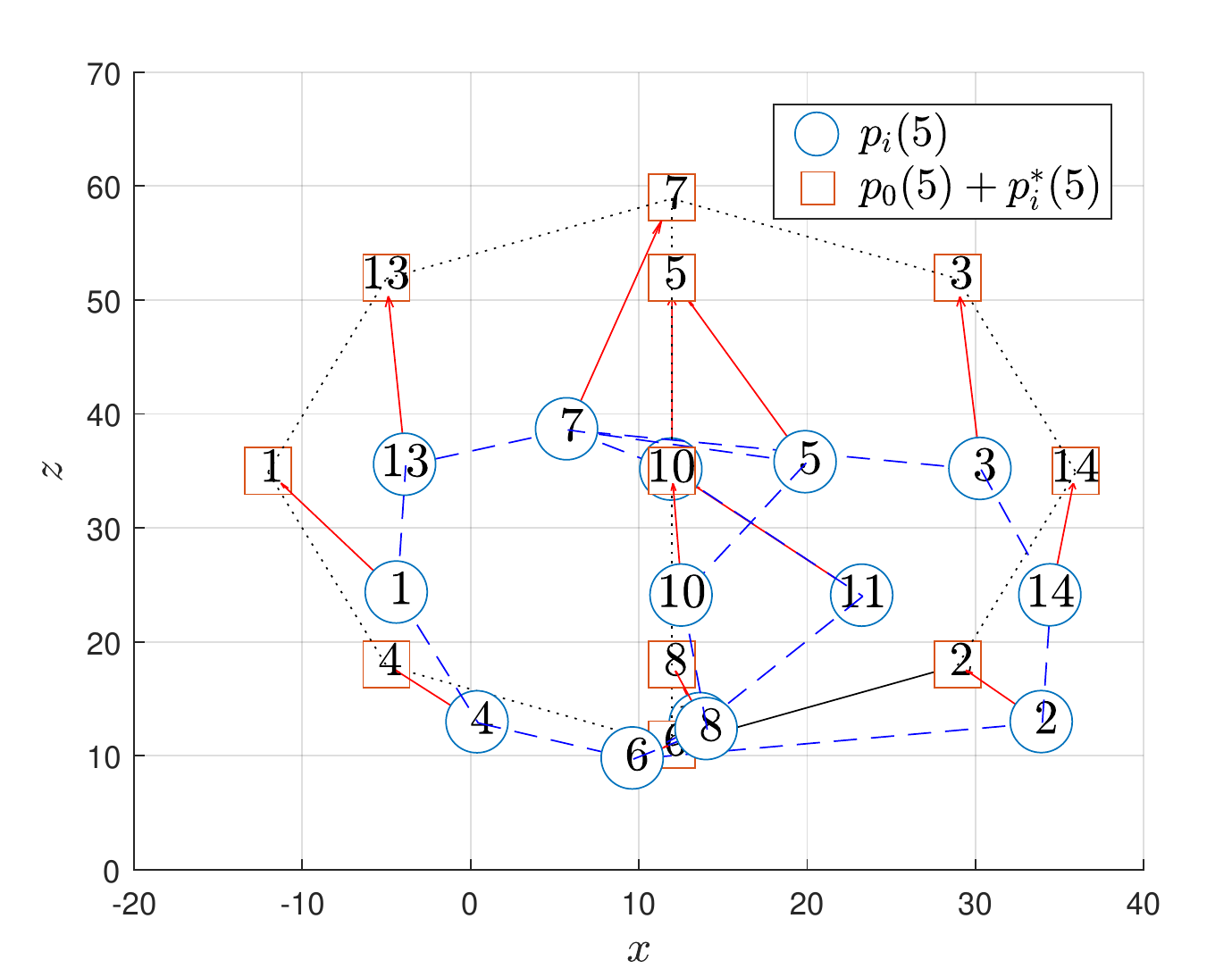, width=5.8cm,height=4.6cm}}		
	\caption{Followers' positions and assigned goals at $t = 5$s  for Example 2 
		(a) without goal assignment in xy coordinates
		(b) without goal assignment in xz coordinates
		(c) with goal assignment in xy coordinates
		(d) with goal assignment in xz coordinates.}
	\label{fig:form-02}
\end{figure}

\begin{figure}
	\centering
	\includegraphics[width=5.8cm,height=4.6cm]{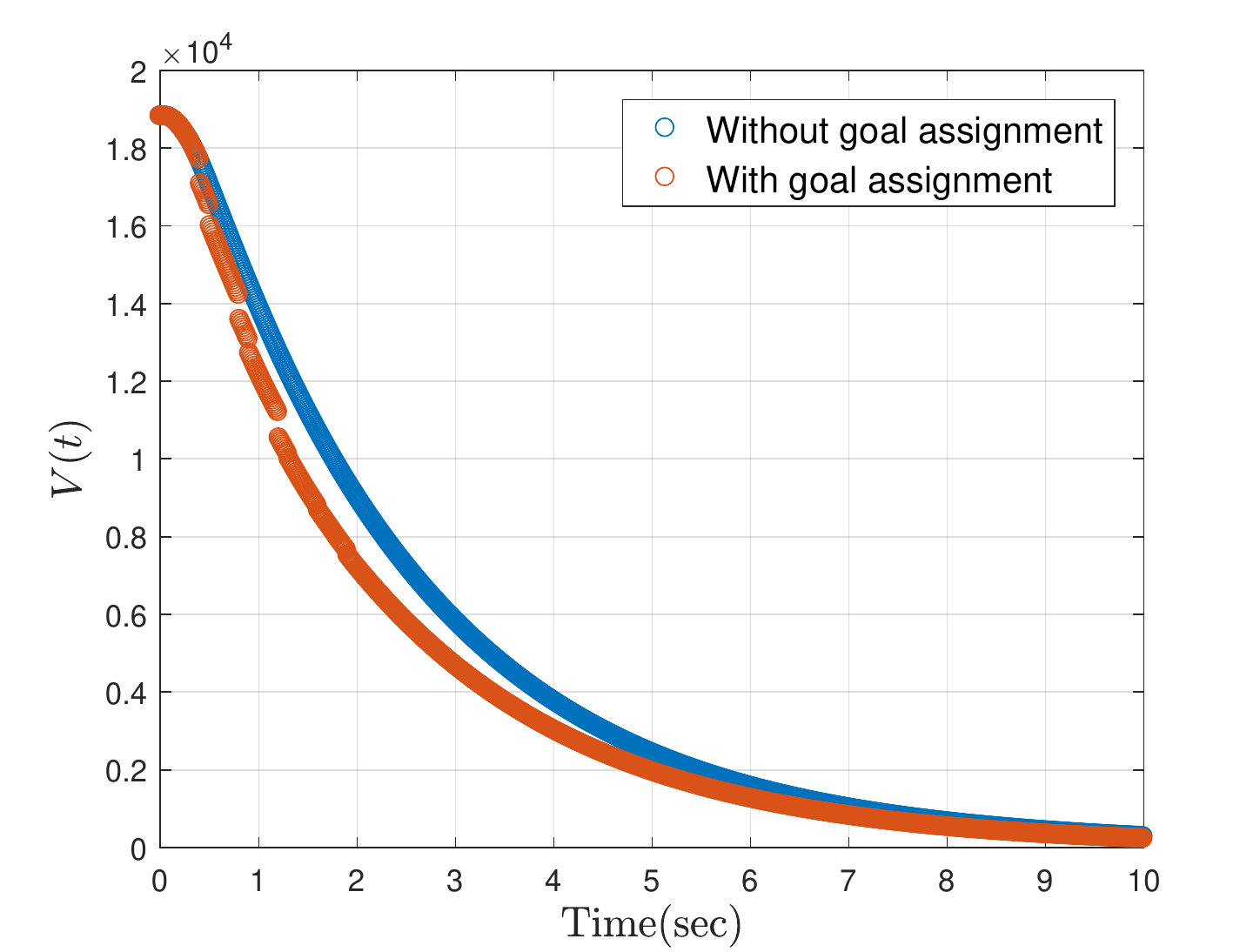}
	\caption{Comparison of the Lyapunov function $V(t)$ for Example 2.}
	\label{fig:lyap-02}
\end{figure}

\section{Conclusion}
\label{sec:conclusion}
A distributed goal assignment and leader-following formation control problem has been solved for second-order multi-agent systems.
Different from the recent goal assignment study, 
the goal positions of followers are given as relative positions with respect to the leader
and the leader information is only available for a small group of the followers.
In order to handle the goal assignment problem in leader-following formation control under distributed communication network,
a distributed estimator and assignment strategy have been presented.
Based on the Lyapunov theory, we have revealed that the assignment strategy improves the control performance
while maintaining the closed-loop stability. 
In addition, the simulation has successfully verified the proposed theoretical result.

\section*{Declaration of Competing Interest}
None.

\section*{Acknowledgment}
This research was supported by the Korea Institute of Science and Technology (KIST) Institutional Program under Grant 2E31581 and by Basic Science Research Program through the National Research Foundation of Korea(NRF) funded by the Ministry of Education(NRF-2021R1A6A3A01086607).

\end{document}